\newcommand\hl[1]{#1}
\theoremstyle{plain}
\newtheorem{obs}{Observation}
\newtheorem{con}{Condition}
\long\def \protected@iwrite#1#2#3{%
     \begingroup
     \let\thepage\relax
     #2%
     \let\protect\@unexpandable@protect
     \edef\reserved@a{\immediate\write#1{#3}}%
     \reserved@a
     \endgroup
     \if@nobreak\ifvmode\nobreak\fi\fi
    }
\newcounter{numquote}
\newcommand\quoteref[1]{\csname#1\endcsname}
\newlist{steps}{enumerate}{10}
\setlist[steps]{label*=\arabic*)}
\crefname{step}{Step}{Steps}
\crefname{appsec}{Appendix}{Appendices}
\crefname{section}{Section}{Sections}
\crefname{con}{Condition}{Conditions}
\crefname{obs}{Observation}{Observations}
\newif\ifshow
\newcommand*{\diff}{\mathop{\kern0pt\mathrm{d}}\!{}}
\DeclareMathOperator*{\argmax}{arg\,max}
\DeclareMathOperator{\KL}{D_{KL}}
\newcommand{\tx}{{\tilde{x}}}
\newcommand{\tpsi}{{\tilde{\psi}}}
\newcommand{\ts}{{\tilde{s}}}
\newcommand{\ta}{{\tilde{a}}}
\newcommand{\tr}{{\tilde{r}}}
\newcommand{\tPsi}{{\tilde{\Psi}}}
\newcommand{\tomega}{{\tilde{\omega}}}
\newcommand{\bp}{{\bar{p}}}
\newcommand{\bM}{{\bar{M}}}
\newcommand{\bQ}{{\bar{Q}}}
\newcommand{\bGamma}{{\bar{\Gamma}}}
\newcommand{\ddt}{{\frac{d}{d t}}}
\title{A Technical Critique of Some Parts of the Free Energy Principle
}
\author{Martin Biehl$^1$\footnote{These authors contributed equally to this work.}\addtocounter{footnote}{-1}\addtocounter{Hfootnote}{-1} 
 \and Felix A. Pollock$^2$\footnotemark \and Ryota Kanai$^1$
}
\date{%
    $^1$Araya Inc., Tokyo 105-0003, Japan\\%
    $^2$School of Physics and Astronomy, Monash University, Clayton, Victoria 3800, Australia\\[2ex]%
    \today
}
\begin{document}

\maketitle

\abstract{
We summarize {the original formulation} of the free energy principle, and highlight some technical issues. We discuss how these issues affect related results involving generalised coordinates and, where appropriate, mention consequences for and reveal, up to now unacknowledged, differences to newer formulations of the free energy principle. 
In particular, we reveal that various definitions of the ``Markov blanket'' proposed in different works are not equivalent. We show that crucial steps in the free energy argument which involve rewriting the equations of motion of systems with Markov blankets, are not generally correct without additional (previously unstated) assumptions. We prove by counterexample that the {original} free energy lemma, when taken at face value, is wrong. We show further that this free energy lemma, when it does hold, implies equality of variational density and ergodic conditional density. The interpretation in terms of Bayesian inference hinges on this point, and we hence conclude that it is not sufficiently justified. Additionally, we highlight that the variational densities presented in newer formulations of the free energy principle { and lemma} are parameterised by different variables than in older works, leading to a substantially different interpretation of the theory. Note that we only highlight some specific problems in the discussed publications. These problems do not rule out conclusively that the general ideas behind the free energy principle are worth pursuing.
}

\section*{Overview}

In \citep{friston_life_2013} it is argued that the internal coordinates of an ergodic random dynamical system with a Markov blanket necessarily appear to engage in active Bayesian inference. Here, we reproduce the argument supporting this interpretation in detail and highlight at which points it faces technical issues. In~the course of our critique, we also mention issues of some closely related alternative arguments. In~cases where our results have clear consequences for the more recent related publications \citep{friston_free_2019,parr_markov_2019} we also mention those. In~particular, we point out a conceptual difference in these latter works that has not previously been acknowledged. However, our analysis thereof does not go beyond a few remarks. In~an additional section we discuss the effect of our argument on \citep{friston_cognitive_2014}. The~logical structure of the present paper is depicted in Figure~\ref{fig:1}. 
We note that the technical issues presented here do not affect the validity of approaches where an (expected) free energy minimizing agent is assumed {a priori}, as~presented in, e.g.,~\citep{friston_active_2015}. None of \citep{friston_life_2013,friston_cognitive_2014,friston_free_2019,parr_markov_2019} make this assumption, they instead aim to identify the conditions under which such agents will emerge {within} a given stochastic process. We criticize specific formal issues in the latter publications but leave open whether they can be fixed.
We now briefly introduce the setting of \citep{friston_life_2013} and then sketch the content of this~paper.

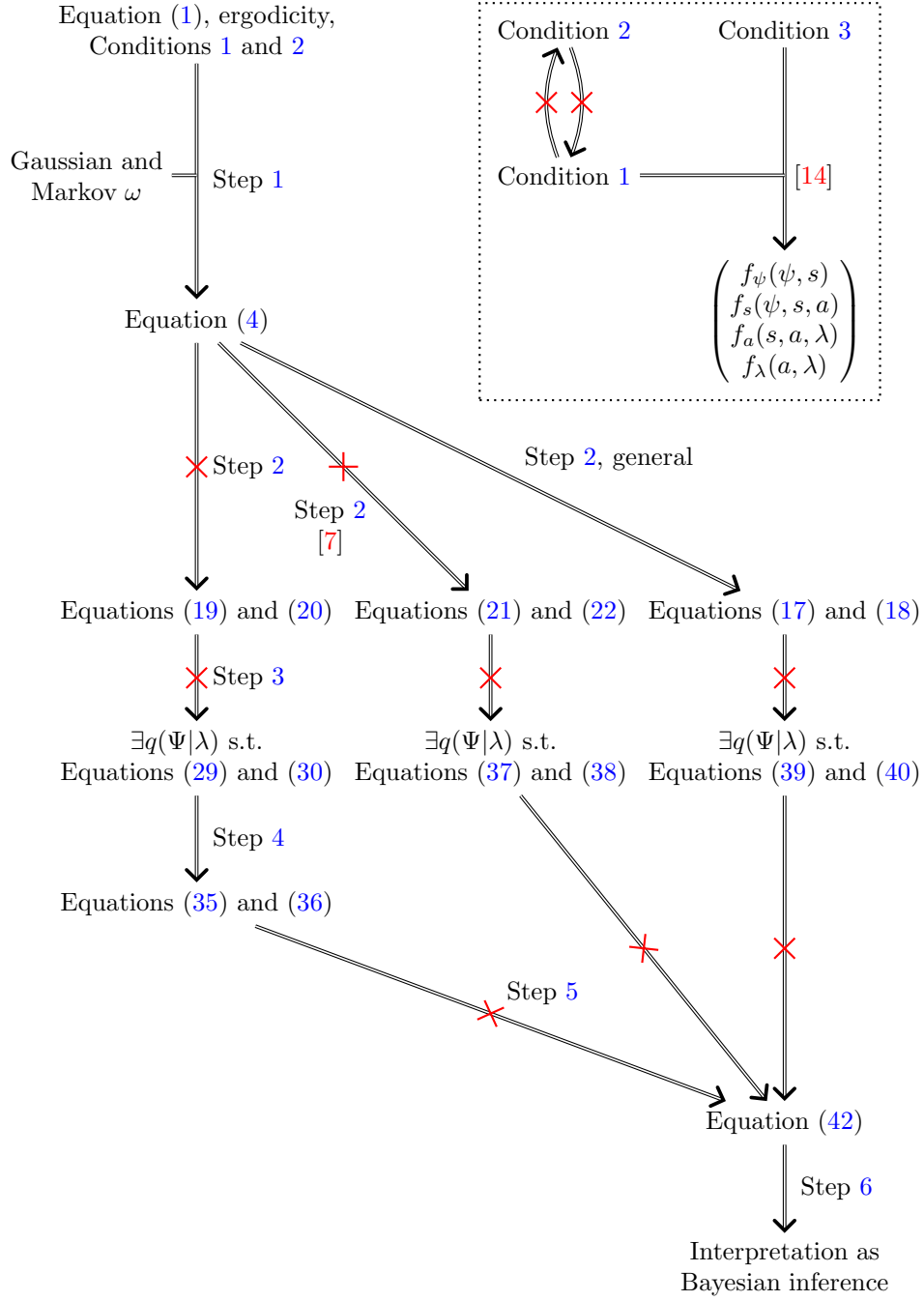
\begin{figure}%
\begin{center}
\begin{tikzpicture}      
				[->,>=angle 90,auto,node distance=2cm]
		\tikzset{
         strike through/.append style={
    decoration={markings, mark=at position 0.5 with {
    \draw[-,thick,red] ++ (4pt,-4pt) -- (-4pt,4pt);
     \draw[-,thick,red] ++ (-4pt,-4pt) -- (4pt,4pt);}
  },postaction={decorate}}
}

		\node (0) [align=center] {Equation (\ref{eq:langevin}), ergodicity,\\ \cref{con:flow,con:fact}};

		\node (c2) [right of=0,node distance=5cm,align=center] {\cref{con:fact}};		
		\node (c1) [below of=c2,align=center] {\cref{con:flow}};
		\node (c3) [right of=c2,node distance=3cm,align=center] {\cref{con:sol}};
		\node (c35) [coordinate,node distance=3cm,right of=c1] {};
		\node (dummy) [coordinate,node distance=3cm,right of=c35] {};
		\node (tdyn) [below of=c35, align=center] {$\begin{pmatrix}
f_\psi(\psi,s) \\
f_s(\psi,s,a) \\
f_a(s,a,\lambda) \\
f_\lambda(a,\lambda)            
          \end{pmatrix}$};
		
		\node[draw,thick,dotted,fit=(c2) (c1) (c3) (tdyn)] {};
		
		\node (05) [coordinate,below of=0, node distance=2cm] {};
		\node (05z) [align=center,left of=05, node distance=1.5cm] {Gaussian and \\Markov $\omega$};
		\node (1) [below of=0,node distance=4cm] {Equation (\ref{eq:25})};
		\node (2) [below of=1, node distance = 4cm] {Equations (\ref{eq:fa}) and (\ref{eq:flambda})};
		\node (2a) [right of=2,node distance =4cm] {Equations (\ref{eq:fa3}) and (\ref{eq:flambda3})};
		\node (2b) [right of=2a,node distance =4cm] {Equations (\ref{eq:fa2}) and (\ref{eq:flambda2})};
		
		\node (3) [below of=2,align=center, node distance = 2cm] {$\exists q(\Psi|\lambda)$ s.t.\ \\ Equations~(\ref{eq:fafe}) and (\ref{eq:flambdafe})};
		\node (3a) [below of=2a,align=center, node distance = 2cm] {$\exists q(\Psi|\lambda)$ s.t.\ \\  Equations~(\ref{eq:fafe4}) and (\ref{eq:flambdafe4})};
		\node (3b) [below of=2b,align=center,node distance =2cm] {$\exists q(\Psi|\lambda)$ s.t.\ \\  Equations~(\ref{eq:fafe3}) and (\ref{eq:flambdafe3})};
		\node (4) [below of=3] {Equations (\ref{eq:gradakl0}) and (\ref{eq:gradlambdakl0})}; 
		\node (45) [coordinate,below of=3b, node distance=4cm] {};
		\node (5) [below of=45,node distance=1cm] {Equation (\ref{eq:pqequal})};
		\node (6) [below of=5,align=center] {Interpretation as \\Bayesian inference};

		\path
        (0) edge [double] node {\;\cref{itm:rwfullgrad}} (1)
        (05z) edge [-,double] (05)
		(1) edge [double,strike through] node {\;\cref{itm:rwpartgrad}}(2)
		(1) edge [double,strike through,style={pos=.55}] node[below,align=center,xshift=-10pt,yshift=-5pt] {\cref{itm:rwpartgrad}\\\citep{friston_free_2019} } (2a)
		(1) edge [double,style={pos=.55}] node {\cref{itm:rwpartgrad}, general} (2b)
		(2) edge [double,strike through] node {\;\cref{itm:exq}} (3)
		(2a) edge [double,strike through] node {} (3a)
		(2b) edge [double,strike through] node {} (3b)
		(3) edge [double] node {\;\cref{itm:gradvanish}} (4)
		(3a) edge [double,strike through] (5)
		(3b) edge [double,strike through] (5)
		(4) edge [double,strike through] node {\;\cref{itm:pqequal}} (5)
		(5) edge [double] node {\;\cref{itm:interpretation}} (6)
		
		(c1) edge [double,bend left=20,strike through]  (c2)
		(c2) edge [double,bend left=20,strike through]  (c1)
		(c3) edge [double,text width=1cm,style={pos=.64}] node {\citep{parr_markov_2019}} (tdyn)
		(c1) edge [-,double] (c35)
		
		;
		\end{tikzpicture}		                 
		\end{center}
		\caption{\hl{Argument visualization}. 
		Numbers labelling edges indicate corresponding steps in this paper. Struck out edges indicate implications that we prove incorrect. The~main argument in \citep{friston_life_2013} takes the left path. The~box in the top right indicates the relations between \cref{con:flow,con:fact,con:sol} and their role in \citep{parr_markov_2019}. Merged edges indicate a logical AND combination of the parent~nodes.}
\label{fig:1}
\end{figure}

The starting point is a random dynamical system whose evolution is governed by the stochastic differential equation
\begin{align}
\label{eq:langevin}
  \dot{x}=f(x) + \omega,
\end{align}
where the system state $x$ and vector field $f(x)$ are multi-dimensional, and~$\omega$ is a Gaussian noise term. There is an additional assumption that the system is ergodic, such that the steady state probability density $p^*(x)$ is well defined. \hl{(In the original paper, the~ergodic density is simply denoted $p(x)$. We here add a star to highlight that it is a time independent probability density.)} 
In this case, $-\ln p^*(x)$ plays the role of a potential function, in~the sense that $f$ can be formulated in terms of its gradients~\citep{ao_potential_2004, kwon_structure_2005}.

It is then assumed that there is a coordinate system $x=(\psi,s,a,\lambda)$ with $\psi=(\psi_1,...,\psi_{n_\psi})$, $s=(s_1,...,s_{n_s})$, $a=(a_1,...,a_{n_a})$, and~$\lambda=(\lambda_1,...,\lambda_{n_\lambda})$, referred to as external, sensory, active, and~internal coordinates \hl{(these are called ``states'' in} \citep{friston_life_2013}) 
 respectively, such that the following condition holds:

\begin{con} \label{con:flow}
The function $f(x)$ can be written as
\begin{align}
\begin{split}
\label{eq:mblanket}
f(x)&= \begin{pmatrix}
f_\psi(\psi,s,a) \\
f_s(\psi,s,a) \\
f_a(s,a,\lambda) \\
f_\lambda(s,a,\lambda)            
          \end{pmatrix}.
  \end{split}
\end{align}
\end{con}

This particular structure is described as ``[formalizing] the dependencies implied by the Markov blanket''~\citep{friston_life_2013}. 
In contrast, more recent works~\citep{friston_free_2019,parr_markov_2019} formulate the Markov blanket in terms of statistical dependencies of the ergodic density $p^*(x) = p^*(\psi,s,a,\lambda)$. 
Specifically, the~following condition is presented:
\begin{con} \label{con:fact}
    The ergodic density factorises as
    \begin{gather}
        p^*(\psi,s,a,\lambda) = p^*(\psi|s,a)p^*(\lambda|s,a)p^*(s,a).
    \end{gather}
\end{con}
In other words, the~internal and external coordinates are independently distributed when conditioned on the sensory and active coordinates. This means we have two different formal expressions of what constitutes a Markov blanket in these publications, and~their relationship has not previously been~established.

Taking \cref{con:flow} to hold, the~argument of \citep{friston_life_2013} then proceeds along the following~steps:
\begin{description}
  \item[\cref{itm:rwfullgrad}] \hl{Rewrite} the vector field $f(\psi,s,a,\lambda)$ describing the dynamics of the system in terms of the gradient of negative logarithm of the ergodic density $p^*(\psi,s,a,\lambda)$ of that system.
  \item[\cref{itm:rwpartgrad}] Rewrite the components $f_\lambda(s,a,\lambda)$ and $f_a(s,a,\lambda)$ of the vector field $f(\psi,s,a,\lambda)$ in terms of only \textit{{partial gradients}} of the negative logarithm of $p^*(\psi,s,a,\lambda)$.
  \item[\cref{itm:exq}] Assert (in the \textit{{Free Energy Lemma}}) the existence of a density $q(\psi|\lambda)$ over the external coordinates $\psi$ parameterized by the internal coordinates $\lambda$, and~that $f(\psi,s,a,\lambda)$ can again be rewritten, this time in terms of a free energy depending on $q(\Psi|\lambda)$. \hl{(Here, and~whenever it would otherwise be ambiguous, we use a capitalized $\Psi$ to indicate full distributions, rather than the probability density for specific value of $\psi$.)} 
  \item[\cref{itm:gradvanish}] Claim 
  that equivalence of the equations of motion in \cref{itm:rwpartgrad} and \cref{itm:exq} implies that certain partial gradients  of the KL divergence between $q(\Psi|\lambda)$ and the conditional ergodic density $p^*(\Psi|s,a,\lambda)$ must vanish.
  \item[\cref{itm:pqequal}] Claim that it follows from \cref{itm:gradvanish} that $q(\Psi|\lambda)$ and $p^*(\Psi|s,a,\lambda)$ are ``rendered'' equal.
  \item[\cref{itm:interpretation}] Interpret 
  \begin{itemize}
\item 
$p^*(\Psi|s,a,\lambda)$ as a posterior over external coordinates given particular values of sensor, active, and~internal coordinates, 
\item $q(\Psi|\lambda)$ as encoding Bayesian beliefs about the external coordinates by the internal coordinates, and~
\item their equality as the internal coordinates appearing to ``solve the problem of Bayesian inference''.                   \end{itemize}
\end{description}

In the present paper, 
we make the following main~observations 
\begin{itemize}
\item The re-expression of Equation~(\ref{eq:langevin}) in the form chosen in \cref{itm:rwfullgrad} is derived under restrictive assumptions, including that the system is subject to Gaussian and Markov~noise.
  \item \cref{con:flow} and \cref{con:fact} are independent from each other.
  \item \cref{con:flow} and \cref{con:sol} together lead to a system where the interpretation of $s$ and $a$ as sensory and active coordinates is questionable. 
  \item Under both \cref{con:flow,con:fact}, the~expressions of $f_\lambda(s,a,\lambda)$ and $f_a(s,a,\lambda)$ resulting from \cref{itm:rwpartgrad} are not as general as those contained in the result of \cref{itm:rwfullgrad}. The~more general alternative expression derived in \citep{friston_free_2019} remains insufficiently general.    
  \item Under both \cref{con:flow,con:fact}, the~Free Energy Lemma, when taken at face value, is wrong and cannot be salvaged by using alternatives in \cref{itm:rwpartgrad}.
  \item Under both \cref{con:flow,con:fact}, contrary to \cref{itm:pqequal} the vanishing of the gradient of the KL divergence does not imply equality of $q(\Psi|\lambda)$ and $p^*(\Psi|s,a,\lambda)$.
  \item As a consequence, the~basic preconditions for the interpretations in \cref{itm:interpretation} are not implied by either of the two proposed Markov blanket \cref{con:flow,con:fact}.
\end{itemize}

The later \citep{friston_cognitive_2014} presents an argument almost identical to the one in the original \citep{friston_life_2013}. In~\cref{sec:ieeepaper} we discuss how our observations apply to this~publication.

\section{Expression via the Gradient of the Ergodic~Density}
\label{itm:rwfullgrad}
Here we introduce the expression of the system's dynamics Equation~(\ref{eq:langevin}) in the form used for the Free Energy Lemma (Lemma 2.1 in \citep{friston_life_2013}).
This form expresses the dynamics of internal and active coordinates of the given ergodic random dynamical system in terms of the gradient of the ergodic density $p^*(x)$.
In accordance with the results of \citep{kwon_structure_2005}, $f(x)$ is rewritten as (see Equation~(2.5) in  \citep{friston_life_2013}):
\begin{align}
\label{eq:25}
    f(x)&=(\Gamma+R) \cdot \nabla \ln p^*(x),
\end{align}
where $\Gamma$ is the diffusion matrix, which we will take to be block diagonal, {(in \citep{friston_life_2013}, and~later work such as \citep{friston_free_2019}, $\Gamma$ \hl{ is taken to be proportional to the identity matrix})} 
and $R$ is an antisymmetric matrix, defined through the relation
\begin{gather} \label{eq:Rdef}
    MR + RM^T = M\Gamma-\Gamma M^T,
\end{gather} 
with 
\begin{gather}
    M_{ij} = \nabla_j f_i(x).
\end{gather} 
Here, and~in all of \citep{friston_life_2013,friston_cognitive_2014,friston_free_2019,parr_markov_2019}, both $\Gamma$ and $R$ are assumed constant. We emphasise here that, for~general nonlinear models, these matrices can vary with the coordinates and \linebreak  Equation~(\ref{eq:Rdef}) holds only approximately \citep{kwon_nonequilibrium_2011,ma_complete_2015}. {(\hl{The exact conditions under which these} matrices can be chosen to be constant can be found in~\citep{ma_complete_2015,yuan_sde_2017} and, \hl{for~the discrete state case,}~\citep{ao_dynamical_2013}}). 
Moreover, Equation~(\ref{eq:25}) is derived in the literature under the explicit assumption that the fluctuations $\omega$ be Gaussian and Markov~\citep{ao_potential_2004, kwon_structure_2005}.  For~the counterexamples we present here, we restrict ourselves to the class of Ornstein-Uhlenbeck processes, for~which $R$ and $\Gamma$ are always constant, and~the ergodic density $p^*(x)=p^*(\psi,s,a,\lambda)$ is necessarily a multivariate Gaussian with zero mean. Specifically, following~\citep{kwon_structure_2005},
\begin{align} \label{eq:ergdef}
  p^*(\psi,s,a,\lambda):=\frac{1}{Z} \exp\left[{-\frac{1}{2} (\psi,s,a,\lambda)U(\psi,s,a,\lambda)^\top}\right],
\end{align}
where $(\psi,s,a,\lambda)$ is a row vector and $Z$ is a suitable normalisation constant. From~ \linebreak Equation~(\ref{eq:25}) it can be seen that,
\begin{align} \label{eq:Udef}
  U=-(\Gamma + R)^{-1} M;
\end{align}
though we emphasise here that strict relations between $M$ and $U$ can only be made because of the assumption that $\Gamma$ and $R$ are coordinate independent~\citep{yuan_lyapunov_2014}. This concludes \cref{itm:rwfullgrad}.

Before moving on to \cref{itm:rwpartgrad}, we note that, under~the assumptions implicit in \cref{itm:rwfullgrad}, we can express \cref{con:flow} and \cref{con:fact} in terms of the matrices $M$ and $U$. {\hl{(In} the nonlinear case, these matrices can still be defined in terms of derivatives of the force vector field and potential, respectively; however, they will be generally coordinate-dependent, even when $\Gamma$ and $R$ are not~\citep{kwon_nonequilibrium_2011}.\hl{)}} 
Firstly, since it effectively states that $\nabla_\psi f_a(x)=\nabla_\psi f_\lambda(x)=\nabla_\lambda f_s(x)=\nabla_\lambda f_\psi(x)=0$,
\begin{gather} \label{eq:con1mat}
    {\rm \cref{con:flow}} \;\;\Leftrightarrow\;\; M_{a\psi} = M_{\lambda \psi} = M_{s\lambda} = M_{\psi \lambda} = 0,
\end{gather}
with $M_{\alpha \beta}$ a block sub-matrix of $M$ in general. Secondly, because~of the multivariate Gaussian nature of $p^*(\psi,s,a,\lambda)$, the~dependencies of conditional distributions are encoded in the inverse $U$ of the covariance matrix; we therefore have that
\begin{gather} \label{eq:con2mat}
    {\rm \cref{con:fact}} \;\;\Leftrightarrow\;\; U_{\psi \lambda}=U_{\lambda \psi}=0,
\end{gather}
where $U_{\alpha\beta}$ is a block sub-matrix of $U$. These implications bring us to our first observation:
\begin{obs} \label{obs:condcounter}
 Neither one of \cref{con:flow} (the vector field dependency structure) or \cref{con:fact} (conditional independence in the ergodic distribution) implies the other:
\begin{align}
     &{\rm \cref{con:flow}}\;\; \nRightarrow \;\;{\rm \cref{con:fact}} \label{eq:flownotimplyfact}\\
     &{\rm \cref{con:flow}}\;\; \nLeftarrow \;\;{\rm \cref{con:fact}}\label{eq:factnotimplyflow}.      
 \end{align}
\end{obs}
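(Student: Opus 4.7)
The plan is to work within the Ornstein--Uhlenbeck setting introduced in \cref{itm:rwfullgrad}, where the equivalences (\ref{eq:con1mat}) and (\ref{eq:con2mat}) reduce both conditions to block-zero statements about the matrices $M$ and $U$. I fix $\Gamma = I$ for concreteness and exploit the fact that, for any symmetric positive definite $U$ and any antisymmetric $R$, the matrix $M := -(I+R)U$ automatically satisfies $MR + RM^T = M - M^T$, so $R$ is the antisymmetric matrix associated with $M$ through Equation~(\ref{eq:Rdef}). The pair $(U,R)$ can therefore be treated as free design variables. I take $n_\psi = n_s = n_a = n_\lambda = 1$ throughout, so each block is a single entry.

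For (\ref{eq:factnotimplyflow}), the construction is immediate. Take $U = I$, so \cref{con:fact} holds trivially, and pick any antisymmetric $R$ with $R_{a\psi} \neq 0$. Then $M = -(I + R)$ has $M_{a\psi} = -R_{a\psi} \neq 0$, violating \cref{con:flow} via (\ref{eq:con1mat}). Ergodicity is automatic: the eigenvalues of $M = -I - R$ are $-1 - i\mu$ with $\mu$ real (eigenvalues of an antisymmetric matrix are purely imaginary), so all have real part $-1 < 0$.

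For (\ref{eq:flownotimplyfact}), the design is more delicate: we must produce $U$ with $U_{\psi\lambda} \neq 0$ whose associated $M = -(I+R)U$ satisfies all four zero conditions in (\ref{eq:con1mat}). The naive attempt of perturbing $U$ from the identity only in the $(\psi,\lambda)$ entry and trying to cancel the perturbation with a single entry of $R$ fails: the antisymmetry of $R$ forces $R_{\psi\lambda} = -R_{\lambda\psi}$, so the two constraints $M_{\psi\lambda} = 0$ and $M_{\lambda\psi} = 0$ become simultaneously solvable only when $U_{\psi\lambda} = 0$. This is the main obstacle. It is resolved by enriching $U$ with a further off-diagonal coupling, for example $U_{\psi s} \neq 0$, which makes $M_{\lambda\psi}$ depend on the additional entry $R_{\lambda s}$ that is unconstrained by $M_{\psi\lambda}$, breaking the apparent over-determination.

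Concretely, I will choose $U$ to be the identity modified by small entries $U_{\psi\lambda} = \epsilon$ and $U_{\psi s} = \alpha$, with $\epsilon$ and $\alpha$ chosen small enough to keep $U$ positive definite. The four zero conditions on $M$ then become four linear equations in the six free entries of $R$; a short calculation shows that they admit a solution in which the entries of $R$ involving the $a$ coordinate can be set to zero, effectively decoupling $a$ from the other coordinates. A direct Hurwitz check on the resulting (block-triangular) $M$ confirms ergodicity and hence the validity of the construction as an Ornstein--Uhlenbeck realisation in the sense of \cref{itm:rwfullgrad}; it exhibits \cref{con:flow} while $U_{\psi\lambda} = \epsilon \neq 0$ and hence, by (\ref{eq:con2mat}), \cref{con:fact} fails.
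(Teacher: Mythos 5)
Your proposal is correct, and it reaches the two counterexamples by a genuinely different route than the paper. The paper's proof (\cref{app:conditioncounter}) works ``forwards'': it parameterises the force matrix $M$, solves Equation~(\ref{eq:Rdef}) for $R$, computes $U=-(\Gamma+R)^{-1}M$, and then shows that $U_{\psi\lambda}$ is a generically non-vanishing rational function of the parameters when \cref{con:flow} holds; for the converse direction it must tune a parameter to the real root of a quintic so that $U_{\psi\lambda}=0$ while some $M$-block stays non-zero. You instead take $(U,R)$ as the primitives and define $M:=-(I+R)U$. The identity you invoke does check out: with $U=U^{T}$ and $R=-R^{T}$ one computes $MR+RM^{T}=-UR-RU=M-M^{T}$, and moreover $MU^{-1}+U^{-1}M^{T}=-2I$, so $U$ really is the precision of the ergodic density of the constructed Ornstein--Uhlenbeck process --- the construction is internally consistent with \cref{itm:rwfullgrad}. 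This makes Equation~(\ref{eq:factnotimplyflow}) a one-line example, and reduces Equation~(\ref{eq:flownotimplyfact}) to a small linear system: writing the coordinates as $(\psi,s,a,\lambda)$ and imposing the four zero-blocks of Equation~(\ref{eq:con1mat}) on $M=-(I+R)U$ with $U_{\psi\lambda}=\epsilon$, $U_{\psi s}=\alpha\neq0$ gives $R_{\psi\lambda}=-\epsilon$, $R_{\lambda s}=-2\epsilon/\alpha$, $R_{s\psi}=-2/\alpha$, with all $a$-entries of $R$ set to zero --- solvable exactly as you claim, with purely rational data and no quintic. Your ergodicity worry can even be dispensed with in general: for any positive definite $U$ and antisymmetric $R$, if $Mv=\mu v$ and $w=Uv$ then $\mu=-(w^{*}w+w^{*}Rw)/(w^{*}U^{-1}w)$ has strictly negative real part, so $M$ is automatically Hurwitz. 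The trade-off is that your construction designs the stationary density first and the dynamics second, whereas the paper's stays in the direction in which the conditions are naturally posed (given a vector field, what is its ergodic density); both are valid, and yours produces the simpler explicit numbers.
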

\begin{proof}
In \cref{app:conditioncounter}, we provide direct counterexamples, using the equivalent constraints on the matrices $M$ and $U$ in Equations~(\ref{eq:con1mat}) and (\ref{eq:con2mat}), to~implication in either direction. That is, there exists a system obeying \cref{con:flow} that does not obey \cref{con:fact} (proving Equation~(\ref{eq:flownotimplyfact})), and~there exists one obeying \cref{con:fact} that does not obey \cref{con:flow} (proving Equation~(\ref{eq:factnotimplyflow})).
\end{proof}
Henceforth, unless~otherwise stated, we will assume both \cref{con:flow}~and \linebreak \cref{con:fact}. Any implications that fail to hold in this special case cannot hold~generally.

\section{Re-Expression Using only Partial~Gradients}
\label{itm:rwpartgrad}
For \cref{itm:rwpartgrad} we focus on the components $f_\lambda=(f_{\lambda_1},...,f_{n_\lambda})$ and $f_a=(f_{a_1},...,f_{n_a})$ of $f$. Without~loss of generality we can rewrite them from  Equation~(\ref{eq:25}) as:
\begin{align}
f_a(s,a,\lambda) =& \left( R_{a \psi} \cdot \nabla_\psi + R_{a s} \cdot \nabla_s+ (\Gamma_{a a}+ R_{a a}) \cdot \nabla_a \right. \nonumber\\ &\;\left.+ R_{a \lambda} \cdot \nabla_\lambda \right) \ln p^*(\psi,s,a,\lambda) , \label{eq:fa0}\\
f_\lambda(s,a,\lambda) =&\left( R_{\lambda \psi} \cdot \nabla_\psi + R_{\lambda s} \cdot \nabla_s  + (\Gamma_{\lambda \lambda}+R_{\lambda \lambda}) \cdot \nabla_\lambda \right. \nonumber\\ &\;\left. + R_{\lambda a} \cdot \nabla_a\right) \ln p^*(\psi,s,a,\lambda),  \label{eq:flambda0}
\end{align}
where $\Gamma_{nm}$ ($R_{nm}$) is the block of $\Gamma$ ($R$) connecting derivatives with respect to the $m$ coordinates to the time derivatives of the $n$ coordinates.
The expectation value with respect to $p^*(\psi|s,a,\lambda)$ leaves the left hand side of these equations unchanged.  A~few manipulations (\citep{friston_free_2019} cf. Equation (12.14), p. 129) reveal that, on~the right hand side, this leads to  the ergodic density $p^*(\psi,s,a,\lambda)$ being replaced by the marginalised ergodic density $p^*(s,a,\lambda)$ so that we get
\begin{align}
f_a(s,a,\lambda) =& \left( R_{a \psi} \cdot \nabla_\psi + R_{a s} \cdot \nabla_s + (\Gamma_{a a}+ R_{a a}) \cdot \nabla_a \right. \nonumber\\ &\;\left.+ R_{a \lambda} \cdot \nabla_\lambda \right) \ln p^*(s,a,\lambda)  \label{eq:fa1}\\
f_\lambda(s,a,\lambda) =&\left( R_{\lambda \psi} \cdot \nabla_\psi + R_{\lambda s} \cdot \nabla_s + (\Gamma_{\lambda \lambda}+R_{\lambda \lambda}) \cdot \nabla_\lambda \right. \nonumber\\ &\;\left.+ R_{\lambda a} \cdot \nabla_a  \right) \ln p^*(s,a,\lambda).  \label{eq:flambda1}
\end{align}
Since $\nabla_\psi \ln p^*(s,a,\lambda)=0$, the~terms involving $\nabla_\psi$ drop out:
\begin{align}
f_a(s,a,\lambda)&= \left( R_{a s} \cdot \nabla_s + (\Gamma_{a a}+ R_{a a}) \cdot \nabla_a + R_{a \lambda} \cdot \nabla_\lambda \right) \ln p^*(s,a,\lambda),  \label{eq:fa2}\\
f_\lambda(s,a,\lambda) &=\left( R_{\lambda s} \cdot \nabla_s + R_{\lambda a} \cdot \nabla_a + (\Gamma_{\lambda \lambda}+R_{\lambda \lambda}) \cdot \nabla_\lambda \right) \ln p^*(s,a,\lambda).  \label{eq:flambda2}
\end{align}
We are not aware of how to further simplify this equation without additional assumptions. However, in~(Equations (2.5) and (2.6) of \citep{friston_life_2013} ) all of the off-diagonal terms are implicitly assumed to vanish, i.e.,~ Equation~(\ref{eq:25}) is equated with:
\begin{align}
f_a(s,a,\lambda)&= (\Gamma_{a a}+ R_{a a}) \cdot \nabla_a \ln p^*(s,a,\lambda),  \label{eq:fa}\\
f_\lambda(s,a,\lambda) &= (\Gamma_{\lambda \lambda}+R_{\lambda \lambda}) \cdot \nabla_\lambda \ln p^*(s,a,\lambda).  \label{eq:flambda}
\end{align}
This equation is the result of \cref{itm:rwpartgrad}. 

In the more recent (Appendix B of \citep{friston_free_2019}) a more detailed discussion of  Equation~(\ref{eq:25}) is presented, where it is claimed that \cref{con:flow} implies \cref{con:fact} (cf. our \cref{obs:condcounter}) along with the following simplification of Equations~(\ref{eq:fa2}) and  (\ref{eq:flambda2})~(\citep{friston_free_2019}, Equations~(12.8)--(12.11), (12.15), pp. 126--129):
\begin{align}
f_a(s,a,\lambda)&= \left((\Gamma_{a a}+ R_{a a}) \cdot \nabla_a +R_{a \lambda} \cdot \nabla_\lambda \right) \ln p^*(s,a,\lambda),   \label{eq:fa3}\\
f_\lambda(s,a,\lambda) &= \left(R_{\lambda a} \cdot \nabla_a +(\Gamma_{\lambda \lambda}+R_{\lambda \lambda}) \cdot \nabla_\lambda \right) \ln p^*(s,a,\lambda).  \label{eq:flambda3}
\end{align}
 However, Equations (\ref{eq:fa3}) and (\ref{eq:flambda3}) are still provably less general than Equations~(\ref{eq:fa0}) and \linebreak   (\ref{eq:flambda0}), even when both \cref{con:flow} and \cref{con:fact} are~satisfied. 

\begin{obs} \label{obs:partgradcounter}
Given a random dynamical system obeying Equation~(\ref{eq:langevin}), ergodicity, and~both \cref{con:flow} and \cref{con:fact}, none of Equations~(\ref{eq:fa})--(\ref{eq:flambda3}) generally hold.
\end{obs}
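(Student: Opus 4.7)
The plan is to construct an explicit counterexample within the Ornstein--Uhlenbeck class introduced in \cref{itm:rwfullgrad}, since there $\Gamma$ and $R$ are guaranteed constant, the ergodic density $p^*$ is a zero-mean multivariate Gaussian, and every quantity entering Equations~(\ref{eq:fa2}) and (\ref{eq:flambda2}) admits a closed-form matrix expression. This makes it feasible to verify or refute each of the four claimed simplifications (\ref{eq:fa})--(\ref{eq:flambda3}) by direct linear algebra, in the same spirit as the counterexamples used for \cref{obs:condcounter}.

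Concretely, I would fix the drift matrix $M$ to respect the block-sparsity pattern of Equation~(\ref{eq:con1mat}), so that \cref{con:flow} holds, and the inverse-covariance matrix $U$ to respect the pattern of Equation~(\ref{eq:con2mat}), so that \cref{con:fact} holds. Taking $\Gamma$ block-diagonal (e.g.\ proportional to the identity, as in \citep{friston_life_2013,friston_free_2019}), the antisymmetric matrix $R$ is then pinned down by the consistency relation $M = -(\Gamma + R)\,U$ from Equation~(\ref{eq:Udef}), in tandem with Equation~(\ref{eq:Rdef}). The objective is to choose the remaining free parameters so that at least one of the ``extra'' off-diagonal blocks $R_{as}$, $R_{a\lambda}$, $R_{\lambda s}$, or $R_{\lambda a}$ is nonzero \emph{and} so that the matching gradient $\nabla_s \ln p^*(s,a,\lambda)$ or $\nabla_\lambda \ln p^*(s,a,\lambda)$ does not vanish at some chosen evaluation point. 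Because Equations~(\ref{eq:fa})--(\ref{eq:flambda3}) are asserted to hold pointwise on the state space, a single such evaluation refutes all four claims simultaneously; in particular, a nonzero $R_{as}$ or $R_{\lambda s}$ already suffices to falsify both the form proposed in \citep{friston_life_2013} and the weaker form proposed in \citep{friston_free_2019}.

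The main obstacle I anticipate is a parameter-counting one: the sparsity patterns demanded by \cref{con:flow,con:fact} interact with the identity $M = -(\Gamma + R)U$ in such a way that, in very low dimensions, the unwanted off-diagonal blocks of $R$ can be forced to vanish automatically, collapsing (\ref{eq:fa2})--(\ref{eq:flambda2}) back onto the reduced right-hand sides and trivialising the attempted refutation. I therefore expect the working counterexample to require at least one of the coordinates (for instance $s$) to be at least two-dimensional, so that enough free antisymmetric couplings survive all the linear constraints imposed simultaneously by the two Markov-blanket conditions and by Equation~(\ref{eq:Rdef}). Once such a matrix tuple is exhibited, the remaining verification--that Conditions~\ref{con:flow} and \ref{con:fact} hold, that each of Equations~(\ref{eq:fa})--(\ref{eq:flambda3}) fails at the chosen point, and that Equations~(\ref{eq:fa2}) and (\ref{eq:flambda2}) continue to hold--is purely mechanical and is most naturally deferred to an appendix parallel to the one cited for \cref{obs:condcounter}.
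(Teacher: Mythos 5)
Your plan is essentially the paper's proof: \cref{app:step2counter} exhibits an explicit Ornstein--Uhlenbeck counterexample with $\Gamma$ the identity, $M$ chosen to satisfy \cref{con:flow}, and $R$ and $U$ computed via Equations~(\ref{eq:Rdef}) and (\ref{eq:Udef}), after which \cref{con:fact} is verified to hold and the nonzero blocks $R_{as}$, $R_{a\lambda}$, $R_{\lambda s}$, $R_{\lambda a}$ are shown to make the differences between the general Equations~(\ref{eq:fa2}) and (\ref{eq:flambda2}) and each of Equations~(\ref{eq:fa}), (\ref{eq:flambda}), (\ref{eq:fa3}), (\ref{eq:flambda3}) nonvanishing linear functions of $(s,a,\lambda)$. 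The one place your anticipation misfires is the dimension count: the paper's working example has $n_\psi=n_s=n_a=n_\lambda=1$, so no coordinate block needs to be two-dimensional for all the offending off-diagonal entries of $R$ to survive the constraints.
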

\begin{proof}
By counterexample, see \cref{app:step2counter}. There, we show explicitly that a model satisfying the above assumptions does not satisfy the equations in question.
\end{proof}

In order to arrive at Equations~(\ref{eq:fa3}) and  (\ref{eq:flambda3}) from Equations~(\ref{eq:fa2}) and  (\ref{eq:flambda2}) in general, one must remove the offending ``solenoidal flow'' terms by fiat. That is, one assumes $R_{as} = R_{\lambda s} = 0$. In~(\citep{friston_free_2019}, Equation (12.4)), the~following, even stronger, condition is assumed as an alternative starting point (along with \cref{con:fact}):
\begin{con} \label{con:sol}
    The blocks of the $R$ matrix appearing in~Equation (\ref{eq:25}) coupling $(s,a)$ coordinates to $\lambda$ and $\psi$ coordinates, and~$\psi$ coordinates to $\lambda$ coordinates vanish, i.e.,
    \begin{gather} \label{eq:nosolenoidal}
        R_{\psi s} = R_{\psi a} = R_{\psi \lambda} = R_{s \lambda} = R_{a \lambda} =0.
    \end{gather}
\end{con}
This is claimed to imply $M_{\psi \lambda} = M_{\lambda \psi} = 0$, but~not the full \cref{con:flow}. However, in~\citep{parr_markov_2019}, both \cref{con:flow} and \cref{con:sol} are assumed (along with $R_{as} = 0$). This prompts our next~observation.

\begin{obs}
In a system satisfying both \cref{con:flow} and \cref{con:sol}, the~internal coordinates cannot be directly influenced by the sensory coordinates: $f_\lambda(s,a,\lambda) = f_\lambda(a,\lambda)$, and~the external coordinates cannot be directly influenced by the active coordinates: $f_\psi(\psi,s,a) = f_\psi(\psi,s)$.
\end{obs}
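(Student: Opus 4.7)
My approach would be a direct block-matrix computation using Equation~(\ref{eq:25}). The first move is to observe that \cref{con:sol} imposes $R_{\psi s}=R_{\psi a}=R_{\psi\lambda}=R_{s\lambda}=R_{a\lambda}=0$, and the antisymmetry of $R$ transfers this to $R_{s\psi}=R_{a\psi}=R_{\lambda\psi}=R_{\lambda s}=R_{\lambda a}=0$. Combined with the block-diagonality of $\Gamma$, this makes the $\psi$-row and $\lambda$-row of $\Gamma+R$ block-diagonal, so substituting into Equation~(\ref{eq:25}) collapses the two components of $f$ of interest to $f_\lambda=(\Gamma_{\lambda\lambda}+R_{\lambda\lambda})\nabla_\lambda \ln p^*$ and $f_\psi=(\Gamma_{\psi\psi}+R_{\psi\psi})\nabla_\psi \ln p^*$.

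For the first claim $f_\lambda(s,a,\lambda)=f_\lambda(a,\lambda)$, I would differentiate to obtain $\nabla_s f_\lambda=(\Gamma_{\lambda\lambda}+R_{\lambda\lambda})\nabla_s\nabla_\lambda \ln p^*$ and then invoke the constraint $M_{s\lambda}=\nabla_\lambda f_s=0$ coming from \cref{con:flow}. Applying the same block simplification to $f_s$ gives $f_s=(\Gamma_{ss}+R_{ss})\nabla_s \ln p^*$, so $M_{s\lambda}=(\Gamma_{ss}+R_{ss})\nabla_s\nabla_\lambda \ln p^*$; invertibility of $\Gamma_{ss}+R_{ss}$ forces $\nabla_s\nabla_\lambda \ln p^*=0$, and equality of mixed partials then yields $\nabla_s f_\lambda=0$. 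A completely symmetric computation, starting from $M_{a\psi}=\nabla_\psi f_a=0$ and using $f_a=(\Gamma_{aa}+R_{aa})\nabla_a \ln p^*$, establishes $\nabla_a f_\psi=0$ and thereby the second claim.

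The main obstacle is bookkeeping, specifically tracking which off-diagonal blocks of $R$ are killed by \cref{con:sol} itself versus by the antisymmetry of $R$, and noticing that the ``solenoidal cross term'' $R_{sa}$ inside the sensor--active pair is \emph{not} zeroed by \cref{con:sol} alone. Without the extra assumption $R_{as}=0$ one can construct an Ornstein--Uhlenbeck example satisfying both \cref{con:flow} and \cref{con:sol} (using Equation~(\ref{eq:con1mat})) in which $f_\lambda$ genuinely depends on $s$. The paragraph just above the observation makes this explicit in citing \citep{parr_markov_2019}, who additionally take $R_{as}=0$, so the observation is to be read in that setting; my simplifications of $f_s$ and $f_a$ above rely on this to drop the $R_{sa}\nabla_a \ln p^*$ and $R_{as}\nabla_s \ln p^*$ terms respectively.
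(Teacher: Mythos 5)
Your proof is correct and rests on the same underlying mechanism as the paper's: the constraint that the gradient--flow form of Equation~(\ref{eq:25}) places on $M$ via the symmetry of the Hessian of $\ln p^*(x)$, combined with the block structure of $\Gamma+R$ forced by \cref{con:sol} and antisymmetry. The difference is in packaging. The paper derives the matrix identity $M=(\Gamma+R)M^T(\Gamma-R)^{-1}$ from Equation~(\ref{eq:Rdef}) and reads off $M_{\beta\alpha}=0\Rightarrow M_{\alpha\beta}=0$ blockwise, so the vanishing of $M_{s\lambda}$ kills $M_{\lambda s}$ as the \emph{middle} factor of a product; no injectivity of the $s$- or $a$-blocks is needed (a pseudoinverse suffices on the right). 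Your unpacked version instead cancels $\Gamma_{ss}+R_{ss}$ and $\Gamma_{aa}+R_{aa}$ on the \emph{left} to extract $\nabla_\lambda\nabla_s\ln p^*=0$ and $\nabla_\psi\nabla_a\ln p^*=0$, so you need those blocks to be injective --- automatic when $\Gamma$ is positive definite, but a genuine extra hypothesis in the degenerate-diffusion case the paper entertains in \cref{itm:gradvanish}. Your third paragraph makes a point worth keeping explicit: \cref{con:sol} leaves $R_{sa}$ and $R_{as}$ free, and without $R_{as}=0$ neither your cancellation of the $R_{sa}\nabla_a$ and $R_{as}\nabla_s$ terms nor the paper's Equation~(\ref{eq:Mblocks}) (whose derivation needs $(\Gamma-R)^{-1}$ not to couple $a$ to $s$) goes through; the paper relies on the sentence preceding the observation, which imports $R_{as}=0$ from \citep{parr_markov_2019}, so you are right that the statement must be read with that additional assumption.
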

\begin{proof}
From Equation~(\ref{eq:Rdef}), it follows that
\begin{gather}
     M = (\Gamma + R) M^T (\Gamma - R)^{-1},
\end{gather}
with the inverse replaced by a pseudoinverse if $\Gamma-R$ is not invertible.
Therefore, if \linebreak $\Gamma_{\alpha \beta} = \delta_{\alpha \beta} \Gamma_{\alpha \alpha}$ and $R_{\alpha \beta} = \delta_{\alpha \beta} R_{\alpha \alpha}$ for blocks of coordinates labelled by $\alpha$ and $\beta$, then
\begin{gather} \label{eq:Mblocks}
    M_{\alpha\beta} = (\Gamma_{\alpha \alpha} + R_{\alpha \alpha}) M^T_{\beta\alpha} (\Gamma_{\beta \beta} - R_{\beta \beta})^{-1},
\end{gather}
and $M_{\beta\alpha}=0 \Rightarrow M_{\alpha\beta}=0$.

\cref{con:sol} implies that the only nonzero blocks of $R$ are $R_{\psi\psi}$, $R_{ss}$, $R_{sa}$, $R_{as}$, $R_{aa}$, and~$R_{\lambda \lambda}$, and~$\Gamma$ is assumed to be block diagonal. As~noted in Equation~(\ref{eq:con1mat}), \cref{con:flow} requires that $M_{a\psi} = M_{\lambda \psi} = M_{s\lambda} = M_{\psi \lambda} = 0$. Through Equation~(\ref{eq:Mblocks}), these together imply that $M_{\lambda s}=M_{\psi a}=0$, and~hence that
\begin{align}
\label{eq:fxparr}
\begin{split}
f(x)&= \begin{pmatrix}
f_\psi(\psi,s) \\
f_s(\psi,s,a) \\
f_a(s,a,\lambda) \\
f_\lambda(a,\lambda)            
          \end{pmatrix},
  \end{split}
\end{align}
as was to be shown.
\end{proof}

In this case, the~four sets of coordinates interact in a chain, and~it is questionable whether the $s$ and $a$ coordinates can be meaningfully interpreted, respectively, as~sensory inputs to the internal coordinates or their boundary-mediated influence on the external coordinates. 

\section{Free Energy~Lemma}
\label{itm:exq}
The relation of the dynamics of the internal coordinates to Bayesian beliefs is made by introducing a density (called the variational density) $q(\Psi|\lambda)$ that is then interpreted as encoding a Bayesian belief. It is parameterized by the internal coordinates $\lambda$ and claimed to be ``arbitrary''. We take this ``at face value'' and consider $q(\Psi|\lambda)$ to be parameterized \emph{{only}} 
   by $\lambda$ and, therefore, to~be independent of $(s,a)$. \hl{(We}   
   note that there is a convention in the literature on variational Bayesian inference 
 e.g.,~in \citep{bishop_pattern_2006} to drop the observed variables/data in the variational density. It is possible that in \citep{friston_life_2013}, $(s,a)$ is seen as observed variables and dropped from the variational density $q(\Psi|\lambda)$ as in this convention. However, the~reason that dropping the observed variables is justified in the established convention is that those observed variables are fixed throughout the minimization of the variational free energy and the parameters of the variational density do not influence the observed data in any way. In~other words the variational density is optimized for a single datapoint. In~\citep{friston_life_2013} the datapoint is continuously changing and partially doing so in dependence on the parameter $\lambda$ as $\dot{a}=f_a(s,a,\lambda)$. These differences and their consequences are non-trivial and beyond the scope of this paper so we assume that the variational density does not depend on $(s,a)$.\hl{)} If $q(\Psi|\lambda)$ is allowed to depend on $(s,a)$, \cref{obs:felcounter} does not apply and the Free Energy Lemma is made trivially true by setting $q(\psi|s,a,\lambda):=p^*(\psi|s,a,\lambda)$. The~existence of the variational density $q(\Psi|\lambda)$ is asserted by the Free Energy Lemma (see Lemma 2.1 in \citep{friston_life_2013}). \hl{(Explicitly, the~Free Energy Lemma asserts the existence of a free energy $F(s,a,\lambda)$ in terms of which $f(\psi,s,a,\lambda)$ can be expressed and not the existence of $q(\Psi|\lambda)$. However, since the free energy is defined as a functional of $q(\Psi|\lambda)$, it exists if and only if a suitable $q(\Psi|\lambda)$ exists.)} 
 
More precisely, the~Free Energy Lemma (and \cref{itm:exq}) asserts that for every ergodic density {\hl{(}equivalently as expressed in \citep{friston_life_2013}, for~every Gibbs energy $G(x):=-\ln p^*(\psi,s,a,\lambda)$\hl{)}}    
$p^*(\psi,s,a,\lambda)$ of a system obeying Equations~(\ref{eq:fa}) and  (\ref{eq:flambda}) there is a free energy $F(s,a,\lambda)$, defined as
\begin{align}
  F(s,a,\lambda):&= -\ln p^*(s,a,\lambda) + \int q(\psi|\lambda) \ln \frac{q(\psi|\lambda)}{p^*(\psi|s,a,\lambda)} \diff \psi \\
  &= -\ln p^*(s,a,\lambda) + \KL[q(\Psi|\lambda)||p^*(\Psi|s,a,\lambda)],\label{eq:fekl}
\end{align}
in terms of the ``posterior density'' $p^*(\Psi|s,a,\lambda)$, {\hl{(}here, we keep the conditioning argument $\lambda$, as~in \citep{friston_life_2013}, and~do not explicitly assume \cref{con:fact}, though~our conclusions are unaffected by it\hl{)}} such that Equations~(\ref{eq:fa}) and  (\ref{eq:flambda}) can be rewritten as:
\begin{align}
f_a(s,a,\lambda) &= -(\Gamma+R)_{a a} \cdot \nabla_a F(s,a,\lambda), \label{eq:fafe}\\
f_\lambda(s,a,\lambda) &= -(\Gamma+R)_{\lambda \lambda} \cdot \nabla_\lambda F(s,a,\lambda)\label{eq:flambdafe}.         
\end{align}

It is worth considering what a proof of the Free Energy Lemma could look like.
A proof of existence of a free energy (and therefore of the Free Energy Lemma) would need to show that, for~every system satisfying the given assumptions, there always exists a $q(\Psi|\lambda)$ such that the right hand sides of Equations~(\ref{eq:fafe}) and  (\ref{eq:flambdafe}) are equal to the right hand sides of Equations~(\ref{eq:fa}) and  (\ref{eq:flambda}). Expanding Equations~(\ref{eq:fafe}) and  (\ref{eq:flambdafe}) using (\ref{eq:fekl}) leads to:
\begin{align}
\begin{split}
f_a(s,a,\lambda) &= (\Gamma+R)_{a a} \cdot \nabla_a \ln p^*(s,a,\lambda) \\
&\phantom{(\Gamma+R)_{a a}}-  (\Gamma+R)_{a a} \cdot \nabla_a \KL[q(\Psi|\lambda)||p^*(\Psi|s,a,\lambda)],\label{eq:fafe2}
\end{split}\\
\begin{split}
f_\lambda(s,a,\lambda) &= (\Gamma+R)_{\lambda \lambda} \cdot \nabla_\lambda \ln p^*(s,a,\lambda)\\ 
&\phantom{(\Gamma+R)_{\lambda \lambda}}-(\Gamma+R)_{\lambda \lambda} \cdot \nabla_\lambda \KL[q(\Psi|\lambda)||p^*(\Psi|s,a,\lambda)].\label{eq:flambdafe2}          
\end{split}
\end{align}
For equality of the right hand sides to those of Equations~(\ref{eq:fa}) and  (\ref{eq:flambda}) we need:
\begin{align}
 (\Gamma+R)_{a a} \cdot \nabla_a \KL[q(\Psi|\lambda)||p^*(\Psi|s,a,\lambda)]&=0\label{eq:nullgrada}
\\
(\Gamma+R)_{\lambda \lambda} \cdot \nabla_\lambda \KL[q(\Psi|\lambda)||p^*(\Psi|s,a,\lambda)]&=0.\label{eq:nullgradlambda}          
\end{align}
In words, these equations say that the Free Energy Lemma holds if any of the following three conditions (of strictly increasing strengths) are~given: 
\begin{enumerate}
    \item There is a $q(\Psi|\lambda)$ such that the partial gradients $\nabla_a$ and $\nabla_\lambda$ of the KL divergence between the variational density and the conditional ergodic density are elements of the nullspaces of $(\Gamma+R)_{a a}$ and $(\Gamma +R)_{\lambda \lambda}$ respectively.
    \item There is a $q(\Psi|\lambda)$ such that the gradients of the KL divergence to $p^*(\Psi|s,a,\lambda)$ are equal to the nullvector:
\begin{align}
  \nabla_a \KL[q(\Psi|\lambda)||p^*(\Psi|s,a,\lambda)] =0,\label{eq:gradakl0}\\
  \nabla_\lambda \KL[q(\Psi|\lambda)||p^*(\Psi|s,a,\lambda)] =0,\label{eq:gradlambdakl0}
\end{align}
Then they are always elements of the nullspaces of $(\Gamma+R)_{a a}$ and $(\Gamma +R)_{\lambda \lambda}$ respectively.
    \item There is a $q(\Psi|\lambda)$ such that $q(\Psi|\lambda)=p^*(\Psi|s,a,\lambda)$ (and hence $p^*(\Psi|s,a,\lambda)=p^*(\Psi|\lambda)$) which implies that the KL divergence to $p^*(\Psi|s,a,\lambda)$ vanishes for all $a,\lambda$ and the two partial gradients are always nullvectors and therefore elements of the according nullspaces. 
\end{enumerate}

The Free Energy Lemma can then be proven by showing that one of these three cases follows from the conditions of the lemma.
However, no attempt is made in \citep{friston_life_2013} to establish this.
Instead the given proof discusses purported \emph{{consequences}} of the existence of a suitable $q(\Psi|\lambda)$. These will be discussed in \cref{itm:gradvanish,itm:pqequal}.
 
Even if the Free Energy Lemma does not hold for systems obeying \linebreak Equations~(\ref{eq:fa}) and  (\ref{eq:flambda}), one might expect that systems that instead only satisfy the more general Equations~(\ref{eq:fa3}) and  (\ref{eq:flambda3}) or the most general   Equations~(\ref{eq:fa2}) and  (\ref{eq:flambda2}). For~these systems the Free Energy Lemma would require that there is a $q(\Psi|\lambda)$ such that
\begin{align}
    f_a(s,a,\lambda)&= \left( (\Gamma_{a a}+ R_{a a}) \cdot \nabla_a + R_{a \lambda} \cdot \nabla_\lambda \right) F(s,a,\lambda), \label{eq:fafe4} \\
f_\lambda(s,a,\lambda) &=\left( R_{\lambda a} \cdot \nabla_a + (\Gamma_{\lambda \lambda}+R_{\lambda \lambda}) \cdot \nabla_\lambda \right) F(s,a,\lambda). \label{eq:flambdafe4}
\end{align}
or
\begin{align}
    f_a(s,a,\lambda)&= \left( R_{a s} \cdot \nabla_s + (\Gamma_{a a}+ R_{a a}) \cdot \nabla_a + R_{a \lambda} \cdot \nabla_\lambda \right) F(s,a,\lambda), \label{eq:fafe3} \\
f_\lambda(s,a,\lambda) &=\left( R_{\lambda s} \cdot \nabla_s + R_{\lambda a} \cdot \nabla_a + (\Gamma_{\lambda \lambda}+R_{\lambda \lambda}) \cdot \nabla_\lambda \right) F(s,a,\lambda), \label{eq:flambdafe3}
\end{align}
hold respectively. However, we find this not to be the case in general.
\begin{obs}
\label{obs:felcounter}
Given a random dynamical system obeying Equation (\ref{eq:langevin}), ergodicity, \linebreak  \cref{con:flow} and \cref{con:fact}, there need not exist a free energy expressed in terms of a variational density  $q(\Psi|\lambda)$ such that: 
\begin{description}
\item[(i)] Equations (\ref{eq:fafe}) and (\ref{eq:flambdafe}) hold if Equations (\ref{eq:fa}) and (\ref{eq:flambda})  do;
 \item[(ii)] Equations (\ref{eq:fafe4}) and (\ref{eq:flambdafe4})   hold if Equations (\ref{eq:fa}) and (\ref{eq:flambda})  don't hold but  Equations (\ref{eq:fa3}) and (\ref{eq:flambda3})  do; 
\item[(iii)]  Equations (\ref{eq:fafe3}) and (\ref{eq:flambdafe3})   hold if neither  Equations (\ref{eq:fa}) and (\ref{eq:flambda}) norEquations (\ref{eq:fa3}) and (\ref{eq:flambda3}) hold but Equations (\ref{eq:fa2}) and (\ref{eq:flambda2}) do.
\end{description}
\end{obs}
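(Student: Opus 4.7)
The plan is to establish each of the three claims (i), (ii), (iii) by exhibiting explicit Ornstein--Uhlenbeck counterexamples satisfying the relevant hypotheses. This restriction is natural because, as emphasised earlier in the paper, an OU process automatically admits constant $\Gamma$ and $R$ matrices, has a multivariate Gaussian ergodic density, and thus yields a conditional density $p^*(\Psi|s,a,\lambda)$ whose mean $\mu_\psi$ is affine in $(s,a,\lambda)$ and whose covariance is independent of $(s,a,\lambda)$. These features make the KL-divergence gradients appearing in Equations~(\ref{eq:nullgrada}) and~(\ref{eq:nullgradlambda}) (and their analogues in cases (ii) and (iii)) explicitly computable, so that non-existence of a suitable $q(\Psi|\lambda)$ can be certified directly.

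For part~(i), I would first reduce the lemma to the null-gradient conditions~(\ref{eq:nullgrada}) and~(\ref{eq:nullgradlambda}) derived earlier in the paper. By choosing the OU model so that $(\Gamma+R)_{aa}$ and $(\Gamma+R)_{\lambda\lambda}$ are invertible, the task reduces to requiring $\nabla_a \KL = 0$ and $\nabla_\lambda \KL = 0$ for all $(s,a,\lambda)$. Writing the Gaussian KL explicitly and exploiting that $q(\Psi|\lambda)$ does not depend on $(s,a)$, the $a$-gradient evaluates to $(\nabla_a \mu_\psi(s,a,\lambda))^\top U_{\psi\psi} (\mu_\psi(s,a,\lambda) - \mu_q(\lambda))$, where $\mu_q$ is the mean of $q$ and $U_{\psi\psi}$ is the conditional precision block. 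Since $\mu_\psi$ is affine in $(s,a,\lambda)$, requiring this expression to vanish identically in $s$ forces $\nabla_a \mu_\psi \equiv 0$, i.e.\ $U_{\psi a} = 0$ (using that, under \cref{con:fact}, $\mu_\psi = -U_{\psi\psi}^{-1}(U_{\psi s} s + U_{\psi a} a)$ by Equation~(\ref{eq:con2mat})). I would then display a concrete four-coordinate OU model satisfying \cref{con:flow,con:fact} with $R$ block diagonal in the $a$ and $\lambda$ blocks (so that Equations~(\ref{eq:fa}) and~(\ref{eq:flambda}) hold) but having $U_{\psi a} \neq 0$, contradicting the requirement.

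For parts~(ii) and~(iii) the strategy is analogous. The right-hand sides of Equations~(\ref{eq:fafe4}), (\ref{eq:flambdafe4}), (\ref{eq:fafe3}) and~(\ref{eq:flambdafe3}) couple the $\nabla_a$, $\nabla_\lambda$ and (in case (iii)) $\nabla_s$ actions on the free energy, so the relevant consistency conditions become matrix equations involving several partial gradients of the KL divergence at once. In each case I would solve for the relation that would have to be satisfied by $q(\Psi|\lambda)$ and show that, for generic OU parameters satisfying Equations~(\ref{eq:fa3}) and~(\ref{eq:flambda3}) (respectively Equations~(\ref{eq:fa2}) and~(\ref{eq:flambda2})) but not the stronger preceding set, this relation again forces $\mu_q(\lambda)$ to agree with $\mu_\psi(s,a,\lambda)$ pointwise in $(s,a)$, which is impossible whenever $\mu_\psi$ depends nontrivially on $(s,a)$.

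The main obstacle is ensuring that the chosen OU parameters simultaneously satisfy all the structural hypotheses---Equation~(\ref{eq:langevin}), ergodicity, \cref{con:flow}, \cref{con:fact}, and the particular partial-gradient equations assumed in each subcase---while still forcing a nontrivial dependence of $\mu_\psi$ on $(s,a)$. Since both $M$ and $U = -(\Gamma+R)^{-1} M$ are involved, enforcing the zero patterns in Equations~(\ref{eq:con1mat}) and~(\ref{eq:con2mat}) together with extra structural assumptions on $R$ can easily over-constrain low-dimensional examples. I expect to need at least one coordinate of each of the four types, and to follow the template of the counterexamples constructed in \cref{app:conditioncounter} and~\cref{app:step2counter}, tuning the off-diagonal entries of $M$ and $R$ so that all the conditions hold simultaneously in each of the three subcases.
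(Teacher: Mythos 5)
Your overall strategy---restrict to Ornstein--Uhlenbeck systems, exploit the Gaussian form of $p^*(\Psi|s,a,\lambda)$ to compute the KL gradients in closed form, derive consistency conditions that over-constrain the system, and then exhibit explicit matrices violating them---is exactly the paper's strategy, and your treatment of case (i) essentially reproduces the paper's argument: requiring $\nabla_a\KL=0$ identically yields $U_{a\psi}U_{\psi\psi}^{-1}U_{\psi\alpha}=0$ for $\alpha\in\{s,a,\lambda\}$, and the coefficient of $a$ (not of $s$, as you state---positive definiteness of $U_{\psi\psi}^{-1}$ applied to $U_{a\psi}U_{\psi\psi}^{-1}U_{\psi a}=0$ is what forces $U_{\psi a}=0$) kills any system with $U_{\psi a}\neq 0$; the paper's \cref{app:felcounter} supplies such a system with $U_{\psi a}=-\nicefrac{4}{17}$.

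For cases (ii) and (iii), however, you have misidentified the mechanism of the contradiction, and this is a genuine gap. You claim the consistency conditions ``force $\mu_q(\lambda)$ to agree with $\mu_\psi(s,a,\lambda)$ pointwise in $(s,a)$,'' which would indeed be immediately impossible---but the conditions do not force this, and if you carry out the computation you will find something different. Because Equations~(\ref{eq:fafe4})--(\ref{eq:flambdafe4}) and (\ref{eq:fafe3})--(\ref{eq:flambdafe3}) couple in $\nabla_\lambda$ of the KL divergence, and $q(\Psi|\lambda)$ genuinely depends on $\lambda$, the gradient $\nabla_\lambda\KL$ contains extra terms beyond the analogue of your $(\nabla_\lambda\mu_\psi)^\top U_{\psi\psi}(\mu_\psi-\mu_q)$ expression: terms in $\nabla_\lambda\langle\psi\rangle_{q(\Psi|\lambda)}$, in $\nabla_\lambda\langle\psi^\top U_{\psi\psi}\psi\rangle_{q(\Psi|\lambda)}$ and in the entropy of $q$, which your sketch omits entirely. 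Matching coefficients of $s$, $a$, $\lambda$ and the constant then yields, from the $a$-equation and the $\lambda$-equation separately, two closed-form expressions for $\nabla_\lambda\langle\psi\rangle_{q(\Psi|\lambda)}$ (Equations~(\ref{eq:contradictiona})--(\ref{eq:contradictionlambda}) and (\ref{eq:contradictiona2})--(\ref{eq:contradictionlambda2}) in the paper) that are mutually inconsistent for generic $R$ and $U$; that inconsistency, not a pointwise mean-matching, is the obstruction. Without these $\lambda$-derivative terms your consistency conditions are simply wrong, and you also have not yet produced the explicit matrices for (ii) and (iii) satisfying the required intermediate equations (e.g.\ Equations~(\ref{eq:fa3}) and (\ref{eq:flambda3}) but not (\ref{eq:fa}) and (\ref{eq:flambda})), which you correctly anticipate is delicate but leave undone.
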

\begin{proof}
In \cref{app:felcounter}, we derive a set of conditions on the $R$ and $U$ matrices, and~on the putative variational density $q(\Psi|\lambda)$, that follow from each of the pairs of equations in cases (i--iii). We show that, in~general, each pair leads to a contradiction and, in~each case, provide a counterexample that falls in the according system class. 
\end{proof}

Before proceeding, we note that later works present an alternative version of the Free Energy Lemma, where the conditioning argument of $q(\Psi|\lambda)$ is replaced by the most likely value of $\lambda$ conditional on the $(s,a)$ coordinates~\citep{friston_free_2019,parr_markov_2019}. We here concern ourselves with the version apparent in \citep{friston_life_2013}, where $q(\Psi|\lambda)$ is parameterised by the internal states themselves, but~will briefly comment on the interpretation of the alternative approach in \cref{itm:interpretation}.

\section{Vanishing~Gradients}
\label{itm:gradvanish}
As mentioned in \cref{itm:exq}, the~proof of the Free Energy Lemma in \citep{friston_life_2013} only discusses its consequences. The~first proposed consequence is that expressing the vector field in terms of a free energy as in Equations~(\ref{eq:fafe}) and  (\ref{eq:flambdafe}) ``requires'' that the gradients with respect to $a$ and $\lambda$ of the KL divergence vanish, i.e., that Equations~(\ref{eq:gradakl0}) and  (\ref{eq:gradlambdakl0}) hold. 

We mentioned in \cref{itm:exq} that the implication in the opposite direction holds. This can be seen from Equations~(\ref{eq:nullgrada}) and  (\ref{eq:nullgradlambda}).
However, if~the nullspace of $(\Gamma+R)_{aa}$ or $(\Gamma+R)_{\lambda \lambda}$ is non-trivial
, then the gradient may be a non-zero element of this subspace and \linebreak Equations~(\ref{eq:fafe}) and  (\ref{eq:flambdafe}) will still hold. In~that case the vanishing gradients would not be necessary for the Free Energy~Lemma.

The conditions under which a non-trivial nullspace exists are discussed in \citep{kwon_structure_2005}. In~short, the~nullspace is guaranteed to be trivial in the special case where $\Gamma$ is positive definite. Whether or not ergodic systems with a Markov blanket can ever admit a non-trivial nullspace, and~hence divergences in Equations~(\ref{eq:fafe2}) and  (\ref{eq:flambdafe2}) with non-vanishing gradients, is not immediately clear. However, in~order to establish the necessity of Equations~(\ref{eq:gradakl0}) and  (\ref{eq:gradlambdakl0}) this remains to be~proven.

\section{Equality of $Q( \Psi| \lambda)$ and $P^*( \Psi|s,a, \lambda)$}
\label{itm:pqequal}
The proof of the Free Energy Lemma in \citep{friston_life_2013} also proposes that the vanishing of gradients of the KL divergence, of~the variational density $q(\Psi|\lambda)$ from the conditional ergodic density $p^*(\Psi|s,a,\lambda)$, implies the equality of these densities. 
We mentioned in Equations~(\ref{itm:gradvanish}) that the implication in the opposite direction holds. This can also be seen from \linebreak Equations~(\ref{eq:nullgrada}) and  (\ref{eq:nullgradlambda}). 
Concerning the implication in the direction proposed by \citep{friston_life_2013}, let us now assume that for a given system Equations~(\ref{eq:fa}) and  (\ref{eq:flambda}) hold, a~variational density $q(\Psi|\lambda)$ does exist, and~the gradients of the KL divergence of the variational and ergodic densities vanish i.e., Equations~(\ref{eq:gradakl0}) and  (\ref{eq:gradlambdakl0}) hold. 
Then consider the argument by \citep{friston_life_2013} in this direct quote (comments in square brackets by us): 
\begin{quote}
\hl{``}However, Equation~({2.6}) [Equations (\ref{eq:fa}) and  (\ref{eq:flambda}) above] requires the gradients of the
divergence to be zero [Equations (\ref{eq:gradakl0}) and  (\ref{eq:gradlambdakl0})], which means the divergence must be
minimized with respect to internal states. This means that
the variational and posterior densities must be equal:
\begin{align*}
q(\psi|\lambda)=p^{[*]}(\psi|s,a,\lambda)\Rightarrow \KL=0\Rightarrow\begin{cases}
                                                             &(\Gamma+R) \cdot \nabla_\lambda \KL=0,\\
                                                             &(\Gamma+R) \cdot \nabla_a \KL=0.
                                                           \end{cases}
\end{align*}
In other words, the~flow of internal and active states
minimizes free energy, rendering the variational density
equivalent to the posterior density over external states.\hl{''}
\end{quote}

The first problem in the above quote is that the minimization of the divergence does not follow from the vanishing gradients. On~the contrary, since Equations~(\ref{eq:gradakl0}) and  (\ref{eq:gradlambdakl0}) must hold for all $(s, a, \lambda)$, the~KL divergence $$\KL[q(\Psi|\lambda)||p^*(\Psi|s,a,\lambda)]$$ cannot depend on $(\lambda,a)$; it therefore has no extremum (and thus no minimum) with respect to either of these~coordinates.

The second problem pertains to the identification of the two distributions at a minimum. In~general, if~we try to find the minimum of a KL divergence between a given probability density $p_1(Y)$ and a  family of densities $p_2(Y|\theta)$ parameterized by $\theta$, then the lowest possible value of zero is achieved only if there is a parameter $\theta_1$ such that $p_2(Y|\theta_1)=p_1(Y)$. If~there is no such $\theta_1$, then the minimum value will be larger than zero. So, even if the divergence were minimized, it would not need to be zero. 
More generally, the~divergence $K(s)$ need not be zero for any value of $s$.   

There is therefore no satisfactory reason given why the variational density $q(\Psi|\lambda)$ and the posterior density $p^*(\Psi|s,a,\lambda)$ should be equal or have low KL divergence. In~fact they need not be. {\hl{(}Note that, since any $q(\Psi|\lambda)$ that doesn't depend on $(s,a)$ is an element of the set of those that do, \cref{obs:pqcounter} remains true for the case where we allow this dependence. In~that case, the~Free Energy Lemma holds because we can set $q(\Psi|s,a,\lambda):=p^*(\Psi|s,a,\lambda)$, and~thus a $q$ exists for which the densities are actually equal. However, the~claim here is that for \emph{every} $q$ that obeys the conditions in \cref{obs:pqcounter} we must have equality.\hl{)}}

\begin{obs}
\label{obs:pqcounter}
Given a random dynamical system obeying Equation (\ref{eq:langevin}), ergodicity, \cref{con:flow} and \cref{con:fact}. Then if, additionally, 
\begin{description}
\item[(i)] Equations~(\ref{eq:fa}) and  (\ref{eq:flambda})  hold and the Free Energy Lemma holds i.e.,  there exists a probability density $q(\Psi|\lambda)$ such that  Equations~(\ref{eq:fafe}) and  (\ref{eq:flambdafe})  hold, or~
\item[(ii)] Equations~(\ref{eq:fa3}) and  (\ref{eq:flambda3})  hold and there exists $q(\Psi|\lambda)$ such that Equations~(\ref{eq:fafe4}) and  (\ref{eq:flambdafe4})  hold, or~\item[(iii)] Equations~(\ref{eq:fa2}) and  (\ref{eq:flambda2})  hold and there exists $q(\Psi|\lambda)$ such that Equations~(\ref{eq:fafe3}) and  (\ref{eq:flambdafe3})  hold,
\end{description}
then there is no $c\geq 0$ for which it can be guaranteed that
\begin{align}
    \KL[q(\Psi|\lambda)||p^*(\Psi|s,a,\lambda)]<c.
\end{align}
In particular, it does not follow from these conditions that
\begin{align}
    \label{eq:pqequal}
    q(\Psi|\lambda)=p^*(\Psi|s,a,\lambda).
\end{align}
\end{obs}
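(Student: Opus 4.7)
The plan is to prove \cref{obs:pqcounter} by exhibiting an explicit counterexample within the Ornstein–Uhlenbeck class of \cref{itm:rwfullgrad}, where both the ergodic density and all conditional densities are multivariate Gaussian and the KL divergence has a closed-form expression. The structural observation I would lean on is that, in all three cases, $q(\Psi|\lambda)$ depends only on $\lambda$ while $p^*(\Psi|s,a,\lambda)$ genuinely depends on $(s,a)$ (e.g., through the joint covariance coupling $\psi$ with $s$ and $a$). So for any fixed $q(\Psi|\lambda)$, the conditional mean of $p^*(\Psi|s,a,\lambda)$ will drift with $(s,a)$, and since $(s,a)$ ranges over all of $\R^{n_s+n_a}$ in the ergodic support, the KL divergence of two Gaussians whose means diverge is itself unbounded. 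The strong pointwise equality in Equation~(\ref{eq:pqequal}) then fails for a dense set of $(s,a)$.

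The construction I have in mind is a four-dimensional system with one coordinate each for $\psi,s,a,\lambda$, obeying \cref{con:flow,con:fact}. For case~(i), I would pick parameters of $U$ and $R$ so that Equations~(\ref{eq:fa}) and~(\ref{eq:flambda}) hold (e.g., by setting the relevant off-diagonal blocks $R_{as}$, $R_{a\lambda}$, $R_{\lambda s}$, $R_{\lambda a}$ to zero), then seek a Gaussian $q(\Psi|\lambda)$ — parameterized by a mean that is affine in $\lambda$ and a fixed variance — such that Equations~(\ref{eq:nullgrada}) and~(\ref{eq:nullgradlambda}) are satisfied. Writing $\KL$ between two Gaussians explicitly, differentiating with respect to $a$ and $\lambda$, and setting the result to zero gives algebraic conditions that decouple into (a) the coefficient of the $\lambda$-dependence of the mean of $q$ and (b) the residual dependence of $p^*(\Psi|s,a,\lambda)$ on $(s,a)$. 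One can solve these so that a valid $q(\Psi|\lambda)$ exists even though $p^*(\Psi|s,a,\lambda)$ retains nontrivial $(s,a)$-dependence; such a $q$ then makes the Free Energy Lemma hold but clearly differs from $p^*(\Psi|s,a,\lambda)$. For cases (ii) and (iii), I would repeat the construction with the more permissive Equations~(\ref{eq:fa3})–(\ref{eq:flambda3}) and (\ref{eq:fa2})–(\ref{eq:flambda2}), where the additional $R$ blocks provide extra freedom rather than extra constraints, so the obstruction only gets easier, not harder.

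The main obstacle is the simultaneous juggling of (a) the dependency patterns of Con1 and Con2 on $M$ and $U$ (Equations~(\ref{eq:con1mat}) and~(\ref{eq:con2mat})), (b) the consistency relation Equation~(\ref{eq:Rdef}) between $M$, $\Gamma$, and $R$, and (c) the existence of a Gaussian $q(\Psi|\lambda)$ rendering the free-energy rewriting valid while \emph{not} coinciding with $p^*(\Psi|s,a,\lambda)$. I expect these constraints to have many degrees of freedom left over in the four-dimensional model, so an existence check via a direct parameter assignment should suffice; I would confirm this with a concrete numerical choice in the appendix. The final step is to note that once $p^*(\Psi|s,a,\lambda)$ has a mean that drifts to infinity with $\|(s,a)\|$ while $q(\Psi|\lambda)$ stays bounded at each fixed $\lambda$, the Gaussian KL formula immediately yields $\sup_{s,a,\lambda} \KL[q(\Psi|\lambda)\Vert p^*(\Psi|s,a,\lambda)] = \infty$, which precludes both a uniform bound $c$ and the pointwise identity in Equation~(\ref{eq:pqequal}).
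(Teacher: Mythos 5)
Your proposal is correct and follows essentially the same route as the paper: an explicit four-dimensional Ornstein--Uhlenbeck counterexample satisfying \cref{con:flow,con:fact} in which a Gaussian $q(\Psi|\lambda)$ makes the $a$- and $\lambda$-gradients of the KL divergence vanish (so the free-energy rewriting holds) while the conditional ergodic density retains a genuine $s$-dependence, so the Gaussian KL formula gives an unbounded divergence as $s$ varies. The paper simply instantiates this with one concrete matrix (for which $R=0$, $U=-M$, and $p^*(\Psi|s,a,\lambda)=p^*(\Psi|s)$) that happens to satisfy all three pairs of equations at once, covering cases (i)--(iii) with a single example rather than three separate constructions.
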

\begin{proof}
By example; see \cref{app:pqcounter}. To~show that the implication does not generally hold for given system and densities $q(\Psi|\lambda)$ that obey  Equations (\ref{eq:fa}), (\ref{eq:flambda}), (\ref{eq:fafe}) and \linebreak (\ref{eq:flambdafe}),  Equations (\ref{eq:fa3}), (\ref{eq:flambda3}), (\ref{eq:fafe4}) and (\ref{eq:flambdafe4}), or~ Equations (\ref{eq:fa2}), (\ref{eq:flambda2}), (\ref{eq:fafe3}) and (\ref{eq:flambdafe3}) we only have to consider a system that obeys all three pairs of equations,  Equations (\ref{eq:fa}) and (\ref{eq:flambda}),  Equations (\ref{eq:fa3}) and (\ref{eq:flambda3}), and~ Equations (\ref{eq:fa3}) and (\ref{eq:flambda3}),  and~for which suitable $q(\Psi|\lambda)$ exist. For~this system we then need to show that the $q(\Psi|\lambda)$ that obey  Equations (\ref{eq:fafe}) and (\ref{eq:flambdafe}) are not necessarily equal (or similar) to $p^*(\Psi|s,a,\lambda)$.

We use a variant of the model used in \cref{app:step2counter} as such a counterexample. This system obeys all three of  Equations (\ref{eq:fa}) and (\ref{eq:flambda}),  Equations (\ref{eq:fa3}) and (\ref{eq:flambda3}), and \linebreak Equations (\ref{eq:fa3}) and (\ref{eq:flambda3}) and the nullspace of the associated $\Gamma+R$ is trivial. We identify a set of possible $q(\Psi|\lambda)$ satisfying  Equations (\ref{eq:fafe}) and (\ref{eq:flambdafe}) which implies that the \emph{gradients of the KL divergence between those $q(\Psi|\lambda)$ and $p^*(\Psi|s,a,\lambda)$} vanish i.e.,  Equations \linebreak (\ref{eq:gradakl0}) and (\ref{eq:gradlambdakl0}) hold. We then demonstrate that for the $q(\Psi|\lambda)$ in this set the \emph{value of the KL divergence to $p^*(\Psi|s,a,\lambda)$} can be arbitrarily large. 
\end{proof}

\section{Interpretation}
\label{itm:interpretation}
Finally, we turn our attention to the interpretation in terms of Bayesian inference, i.e., \cref{itm:interpretation}. 
We again quote directly from \citep{friston_life_2013}: 
\begin{quote}
  Because (by Gibbs inequality) this divergence [ $\KL[q(\psi|\lambda)||p^*(\psi|s,a,\lambda)]$] cannot be less
than zero, the~internal flow will appear to have minimized
the divergence between the variational and posterior density. In~other words, the~internal states will appear to have solved
the problem of Bayesian inference by encoding posterior
beliefs about hidden (external) states, under~a generative
model provided by the Gibbs energy.
\end{quote}

We have shown that, in~general, there is no suitable variational density that is only parameterized by the internal coordinate $\lambda$. We then showed that, even if there is a suitable variational density (including those parameterized by all of $(s,a,\lambda)$), it can be arbitrarily different from the posterior density.
Since the arguments for the internal flow appearing to minimize the divergence between variational and posterior density are therefore incorrect, there is no reason why the internal states should appear to have solved the problem of Bayesian~inference.

As mentioned in \cref{itm:exq}, some newer works (e.g.,  \citep{friston_free_2019,parr_markov_2019}) formulate a different \linebreak Free Energy principle, where the variational density of beliefs is parameterised \linebreak not by the internal coordinates $\lambda$ but by $\bar{\lambda}(s,a) = \argmax_\lambda p^*(\lambda|s,a)$, the~most likely value of the internal coordinates given the sensory and active ones. In~this case, \cref{obs:felcounter,obs:pqcounter} do not apply. However, the~new parameters $\bar{\lambda}(s,a)$ are strictly a function of the sensory and active coordinates. This means we have a Markov chain {\hl{(}with capitalisations indicating random variables associated to the corresponding lower case coordinates (or functions of coordinates)\hl{)}}  
 $\Lambda \rightarrow (S,A) \rightarrow \bar{\Lambda}$ and, by~the data processing inequality~\citep{cover_elements_2006}, the~mutual information between the both sensory and active coordinates and the belief parameter $\bar{\lambda}$ upper bounds that between the internal coordinates and the belief parameter. It is therefore not clear to what extent the internal coordinates $\lambda$, rather than the active and sensory coordinates $(s,a)$ themselves, can be said to be encoding beliefs about the external coordinates.  
Note also that, on~any given trajectory, unless~the distribution $p^*(\lambda|s,a)$ is sufficiently peaked and unimodal, the~internal coordinates are not guaranteed to spend most of their time close to their most likely conditional value, and~(by definition if \cref{con:fact} holds), they will not be better predictors of the external coordinates than those in the Markov~blanket.

Generally, $\lambda \neq \bar{\lambda}$ and $\bar{\lambda}$ is the solution to an optimization problem that is assumed to be solved in these later works. Using this optimized variable to parameterise beliefs is therefore a considerable departure from~\cite{friston_life_2013}. Contrary to the impression created by the way it is referenced in \citep{friston_free_2019,parr_markov_2019}, the~older theory in \citep{friston_life_2013} should be clearly distinguished from the newer ones in these more recent~papers.

\crefalias{section}{section}
\section{Consequences for \citep{friston_cognitive_2014}}
\label{sec:ieeepaper}

\hl{The reference} \citep{friston_cognitive_2014} {argues} for the same interpretation as \citep{friston_life_2013} but there are some differences in the~argument. 

The differences are the~following:
\begin{itemize}
\item In \citep{friston_cognitive_2014},  Equation (\ref{eq:langevin}) is formulated for ``generalized states,'' which we refer to here as generalized coordinates. This means that the variable $x$ is replaced by a multidimensional variable denoted $\tx=(x,x',x'',...)$.  
\item The Markov blanket structure is not explicitly defined via  Equation (\ref{eq:mblanket}). Formally, it is introduced directly (see \citep{friston_cognitive_2014} Equation (10)) in a less general form corresponding to  Equations (\ref{eq:fa}) and (\ref{eq:flambda}). {\hl{(}At the same time \citep{friston_life_2013} is referenced in connection to the Markov blanket so there seems to be no intention to replace the original definition with the stronger one.\hl{)}}  
Therefore, our observations concerning \cref{itm:rwfullgrad,itm:rwpartgrad,itm:exq} are not directly relevant to this paper.

\item The internal coordinate $\lambda$ is renamed to $r$ and the role of matrix $R$ is played by the matrix $-Q$.
\item The proof of the Free Energy Lemma given in \citep{friston_cognitive_2014} is different. It (implicitly) suggests to set the variational density equal to the ergodic conditional posterior.
\item The proof of the Free Energy Lemma no longer contains the proposition that the gradient of the KL divergence of the variational density and the ergodic conditional density vanish i.e., \cref{itm:gradvanish}.
\item The proof also no longer contains the claim that the vanishing gradients of the KL divergence of the variational density and the ergodic conditional density imply equality of those densities i.e., \cref{itm:pqequal} is not~present.

\end{itemize}

The interpretation in terms of Bayesian inference is unchanged and still relies on the equality of the variational and the ergodic conditional~density.

Since there are no explicit generalized coordinate versions of \cref{itm:rwfullgrad,itm:rwpartgrad,itm:gradvanish,itm:pqequal} in~\citep{friston_cognitive_2014} we do not discuss those steps here. We only disprove the Free Energy Lemma and the claim that when the Free Energy Lemma holds the variational and ergodic conditional density become equal. For~this we present a way to translate the counterexamples used in \cref{obs:felcounter,obs:pqcounter} into counterexamples in generalized coordinates.
The interpretation in terms of Bayesian inference given in \citep{friston_cognitive_2014} is therefore equally as unjustified as the one in~\citep{friston_life_2013}.

For completeness, we first state the generalized coordinate versions of the stochastic differential equation Equation (\ref{eq:langevin})
\begin{align}
\label{eq:langevingc}
\dot{\tx}=f(\tx) + \tomega,
\end{align} 
the less general version of the Markov blanket structure  Equation (\ref{eq:mblanket})
\begin{align}
\label{eq:mblanketgc}
\begin{split}
f_\tpsi(\tpsi,\ts,\ta) &= (\Gamma - Q)_{\tpsi \tpsi} \nabla_\tpsi \ln p^*(\tpsi,\ts,\ta,\tr)\\
f_\ts(\tpsi,\ts,\ta) &= (\Gamma - Q)_{\ts \ts} \nabla_\ts \ln p^*(\tpsi,\ts,\ta,\tr)\\
f_\ta(\ts,\ta,\tr) &= (\Gamma - Q)_{\ta \ta} \nabla_\ta \ln p^*(\tpsi,\ts,\ta,\tr)\\
f_\tr(\ts,\ta,\tr) &= (\Gamma - Q)_{\tr \tr} \nabla_\tr \ln p^*(\tpsi,\ts,\ta,\tr),
\end{split}
\end{align}
the expression of the $\ta$ and $\tr$ components of the vectorfield in terms of the marginalised ergodic density  Equations (\ref{eq:fa}) and (\ref{eq:flambda})
\begin{align}
f_\ta(\ts,\ta,\tr)&= (\Gamma- Q)_{\ta \ta} \cdot \nabla_\ta \ln p^*(\ts,\ta,\tr),  \label{eq:fagc}\\
f_\tr(\ts,\ta,\tr) &= (\Gamma-Q)_{\tr \tr} \cdot \nabla_\tr \ln p^*(\ts,\ta,\tr),  \label{eq:flambdagc}
\end{align}
and  in terms of free energy  Equations (\ref{eq:fafe}) and (\ref{eq:flambdafe}):
\begin{align}
f_\ta(\ts,\ta,\tr) &= (Q-\Gamma)_{\ta \ta} \cdot \nabla_\ta F(\ts,\ta,\tr), \label{eq:fafegc}\\
f_\tr(\ts,\ta,\tr) &= (Q-\Gamma)_{\tr \tr} \cdot \nabla_\tr F(\ts,\ta,\tr)\label{eq:flambdafegc}.         
\end{align}
The Free Energy Lemma then requires that there exists $q(\tPsi|\tr)$ such that the KL divergence between $p^*(\tPsi|\ts,\ta,\tr)$ vanishes. Without~going into further details of the difference between the proof in \citep{friston_cognitive_2014} and that in \citep{friston_life_2013}, we can prove the former wrong by translating the counterexample used for the latter into generalised coordinates. 
\begin{obs}
There is a general way to translate a system in ordinary coordinates into a system of generalised coordinates that corresponds to an infinite number of independent copies of the original system. This means all properties of the original system (e.g., linearity, ergodicity, the~Gaussian and Markovian property of the noise, \cref{con:flow,con:fact}, properties of $\Gamma,R,U$) are preserved during this translation.
\end{obs}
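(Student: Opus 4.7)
The plan is to construct, for any given ordinary-coordinate system obeying Equation~(\ref{eq:langevin}), a generalised-coordinate system whose levels decouple into independent copies of the original. Concretely, set $\tx = (x^{(0)}, x^{(1)}, x^{(2)}, \ldots)$ with each level $x^{(k)} = (\psi^{(k)}, s^{(k)}, a^{(k)}, \lambda^{(k)})$ a full copy of the original state, and define
\begin{align*}
\tilde{f}(\tx) = \bigl(f(x^{(0)}), f(x^{(1)}), f(x^{(2)}), \ldots\bigr), \qquad \tomega = (\omega^{(0)}, \omega^{(1)}, \omega^{(2)}, \ldots),
\end{align*}
where the $\omega^{(k)}$ are mutually independent copies of the original noise process $\omega$. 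This yields a generalised-coordinate SDE of the form~(\ref{eq:langevingc}) in which level $k$ evolves autonomously as $\dot{x}^{(k)} = f(x^{(k)}) + \omega^{(k)}$, completely decoupled from the other levels.

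Next, I would verify that the relevant properties transfer level-wise. If $f$ is linear then so is $\tilde{f}$, with block-diagonal representation. Independent Gaussian and Markov noises produce a jointly Gaussian and Markov $\tomega$. Since independent ergodic components yield a jointly ergodic process whose steady-state density is the product $p^*(\tx) = \prod_k p^*(x^{(k)})$, ergodicity is preserved and the generalised ergodic density inherits a factorised form across levels. Partitioning the generalised coordinates into $\tpsi = (\psi^{(k)})_k$, $\ts = (s^{(k)})_k$, $\ta = (a^{(k)})_k$, $\tr = (\lambda^{(k)})_k$, the block structure of $\tilde{f}$ inherits the dependency pattern of \cref{con:flow} at each level without introducing any cross-level coupling, so \cref{con:flow} lifts to the generalised system. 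Likewise the product form of $p^*(\tx)$ immediately lifts \cref{con:fact}.

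I would then observe that the matrices $\tilde{\Gamma}$, $\tilde{R}$, and $\tilde{U}$ of the generalised system are all block-diagonal with one copy of the corresponding $\Gamma$, $R$, $U$ sitting on each level, so their rank, nullspace, symmetry, constancy, and definiteness properties are unchanged by the translation. Consequently, any counterexample established in ordinary coordinates---in particular the ones used in \cref{obs:felcounter,obs:pqcounter}---yields a counterexample in generalised coordinates by lifting via this construction: a purported generalised variational density $q(\tPsi|\tr)$ must in particular be consistent at level zero with some ordinary variational density, and we have already shown that at level zero no such density can satisfy the Free Energy Lemma or agree with the posterior. The interpretation advanced in \citep{friston_cognitive_2014} therefore fails for the same reason it fails in \citep{friston_life_2013}.

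The main obstacle is presentational rather than mathematical: one needs infinitely many independent copies, so care is required to work on the infinite product $\sigma$-algebra. Equivalently---and sufficient for the counterexample application---one may truncate at any finite number of levels, since all verifications above are level-wise and no argument invoked here requires the full infinite hierarchy.
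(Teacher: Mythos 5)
Your construction has the right product structure, but it skips the step that gives the observation its content. ``Generalised coordinates'' in the sense of \citep{friston_cognitive_2014} are not an arbitrary tuple of copies of the state: by definition the $n$-th block of $\tx=(x,x',x'',\dots)$ is the $n$-th \emph{time derivative} of the base-level trajectory. You instead posit independent levels $x^{(k)}$ with dynamics $\dot{x}^{(k)}=f(x^{(k)})+\omega^{(k)}$ by fiat, never identifying $x^{(k)}$ with $\frac{d^k}{dt^k}x$. An object built that way is just a direct product of unrelated systems; it is not a \emph{translation} of the original system into generalised coordinates, and so it cannot be used to transport the counterexamples into the setting of \citep{friston_cognitive_2014}. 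The paper's proof derives the level-wise equations by differentiating the original SDE, obtaining $\frac{d^n}{dt^n}x=M\,\frac{d^{n-1}}{dt^{n-1}}x+\frac{d^n}{dt^n}\omega$; this is where linearity of $f$ enters as a \emph{precondition} (for nonlinear $f$ one gets $\nabla f(x)\cdot\dot{x}$, not $f(\dot{x})$), rather than merely as a ``property that transfers'' as you treat it.

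The second missing ingredient is the noise. Once the levels are the derivatives, the driving terms $\omega^{(k)}=\frac{d^k}{dt^k}\omega$ are not free to be declared ``mutually independent copies'': their joint law is determined by the law of $\omega$, and for delta-correlated noise these derivatives are not even well defined without extra structure. The paper devotes a paragraph to exactly this point and makes the independence and identical distribution of the $\omega^{(k)}$ an explicit additional assumption, flagging that different regularisations of the white-noise limit can give wildly different (vanishing or divergent) moments and can break ergodicity. Your proposal silently assumes away the one technically delicate part of the argument. Finally, the closing suggestion to truncate at finitely many levels is at odds with the framework being criticised, in which the generalised state is by definition infinite dimensional; the paper handles this by taking $\bM$, $\bGamma$, $\bQ$, $\bar U$ to be infinite block-diagonal matrices and the ergodic density to be the infinite product $\bp^*(\tx)=\prod_n p^*(x^{(n)})$.
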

\begin{proof}
By construction, see \cref{app:ieeecounter}.
\end{proof}
This implies that the counterexamples used in proving \cref{obs:felcounter,obs:pqcounter} directly translate to the setting of the generalised coordinates. The~Free Energy Lemma is therefore also wrong for generalised coordinates and the variational density $q(\tPsi|\tr)$ is not ``ensured''~\citep{friston_cognitive_2014} to be equal to the conditional ergodic density $p^*(\tPsi|\ts,\ta,\tr)$.

\section{Conclusions}
We found that the two different Markov blanket conditions proposed in \citep{friston_life_2013,friston_free_2019,parr_markov_2019} are independent from each other.
We then showed that under both of those Markov blanket conditions, among~the six steps contained in the argument in \citep{friston_life_2013}, three do not hold independently from each other. We also showed that fixing the second of those steps \linebreak  (\cref{itm:rwpartgrad}) does not provide an valid alternative. The~line of reasoning of \citep{friston_life_2013} therefore does not support its claim that the internal coordinates of a Markov blanket ``appear to have solved the problem of Bayesian inference by encoding posterior beliefs about hidden (external) [coordinates], ...''. We have also shown that using generalised coordinates as in \citep{friston_cognitive_2014} does not remedy the situation.
Additionally, we identified a technical error in~\citep{friston_free_2019} and an interpretational issue resulting from possibly too strong assumptions (both \linebreak  \cref{con:flow,con:sol}) in \citep{parr_markov_2019}. 
We also  {highlighted} that the latter publications both argue that it is the most likely internal coordinates given sensory and active coordinates that encode posterior beliefs about external states instead of the internal coordinates themselves.  {The resulting free energy principle and lemma are therefore} a different proposal{. This is} not subject to our technical~critique.

\vspace{6pt} 



\section*{Contributions}
Conceptualization, M.B., F.A.P. and R.K.; Formal analysis, M.B. and F.A.P.; Funding acquisition, F.A.P. and R.K.; Methodology, M.B., F.A.P. and R.K.; Visualization, M.B. and F.A.P.; Writing---original draft, M.B. and F.A.P.; Writing---review \& editing, M.B., F.A.P. and R.K.
All authors have read and agreed to the published version of the~manuscript.

\section*{Funding}
The work by Martin Biehl and Ryota Kanai on this publication was made possible through the support of a grant from Templeton World Charity Foundation, Inc. The~opinions expressed in this publication are those of the authors and do not necessarily reflect the views of Templeton World Charity Foundation, Inc. \hl{Martin Biehl and Ryota Kanai are also funded by the Japan Science and Technology Agency (JST) CREST project.} Felix A. Pollock acknowledges support from the Monash University Network of Excellence for Consciousness and Complexity in the Conscious~Brain.




\section*{Acknowledgments}
All authors are grateful to Karl Friston and Thomas Parr for constructive feedback on an earlier version of this work. We also want to thank Danijar Hafner for pointing us to~\citep{ma_complete_2015}.
Martin Biehl wants to thank Yen Yu for helpful discussions on generalized coordinates.

\section*{Conflicts of interest}
The authors declare no conflict of interest. The~funders had no role in the design of the study; in the collection, analyses, or~interpretation of data; in the writing of the manuscript, or~in the decision to publish the~results.


\begin{appendices}
\crefalias{section}{appsec}



\section{Counterexamples for \cref{obs:condcounter}}
\label{app:conditioncounter}
Consider a four dimensional linear system obeying  Equation (\ref{eq:langevin}) for which there are coordinates $x=(\psi,s,a,\lambda)$ with $n_\psi=n_s=n_a=n_\lambda=1$ and
\begin{align}
  f(x)=M x,
\end{align}
with the parameterisation
\begin{align}
\label{eq:mblanketM}
  M= \left(
\begin{array}{cccc}
 -1 & m_{1} & m_{2} & m_{3} \\
 m_{2} & -1 & m_{2} & m_{3} \\
 m_{3} & m_{2} & -1 & m_{2} \\
 m_{3} & m_{2} & m_{1} & -1
\end{array}
\right).
\end{align}
From  Equation (\ref{eq:con1mat}), it is clear that the system obeys \cref{con:flow} if $m_3=0$. In~this case, taking $\Gamma$ to be the identity matrix, it is possible to show that
\begin{gather}
    U_{\psi \lambda} = -\frac{m_2(m_1-m_2+2)(m_2^3+m_1^2m_2-2m_1m_2^2-4m_1-2)}{(m_1^2+m_2^2-4m_2+4)(m_1^2+5m_2^2-4m_1m_2+4m_2+4)}.
\end{gather}
For fixed, finite $m_2$, this is zero only for a few discrete values of $m_1$, such as $m_1 = m_2-2$; that it is generically non-zero  proves  Equation (\ref{eq:flownotimplyfact}). As~a concrete example, the~following
\begin{align}
  M=\left(
\begin{array}{cccc}
 -1 & -\nicefrac{2}{3} & -\nicefrac{2}{3} & 0 \\
 -\nicefrac{2}{3} & -1 & -\nicefrac{2}{3} & 0 \\
 0 & -\nicefrac{2}{3} & -1 & -\nicefrac{2}{3} \\
 0 & -\nicefrac{2}{3} & -\nicefrac{2}{3} & -1 \\
\end{array}
\right),
\end{align}
has
\begin{align}
    R = \left(
\begin{array}{cccc}
 0 & -\nicefrac{1}{8} & \nicefrac{3}{8} & 0 \\
 \nicefrac{1}{8} & 0 & 0 & -\nicefrac{3}{8} \\
 -\nicefrac{3}{8} & 0 & 0 & \nicefrac{1}{8} \\
 0 & \nicefrac{3}{8} & -\nicefrac{1}{8} & 0 \\
\end{array}
\right)
\end{align}
and (full rank and hence ergodic)
\begin{align}
  U=\left(
\begin{array}{cccc}
\nicefrac{236}{255} & \nicefrac{127}{255} & -\nicefrac{31}{85} & -\nicefrac{12}{85} \\
 \nicefrac{127}{255} & \nicefrac{274}{255} & \nicefrac{206}{255} & \nicefrac{31}{85} \\
 \nicefrac{31}{85} & \nicefrac{206}{255} & \nicefrac{274}{255} & \nicefrac{127}{255} \\
 -\nicefrac{12}{85} & \nicefrac{31}{85} & \nicefrac{127}{255} & \nicefrac{236}{255} \\
\end{array}
\right),
\end{align}
and hence ergodic density
\begin{align}
\label{eq:pstar1}
 p^*(\psi,s,a,\lambda)=&\sqrt{\frac{28}{2295 \pi ^4}}\, \exp\left[-\frac{1}{255} \left(137 (a^2+s^2)+118 (\psi^2+\lambda ^2)\right.\right. \nonumber \\
 &\qquad\left.\left.+127 (\psi s+a \lambda) +93 (\psi a+s \lambda)+206 a s-36   \psi \lambda \vphantom{137 (a^2+s^2)+118 (\psi^2+\lambda ^2)}\right)\vphantom{-\frac{1}{255} \left(137 (a^2+s^2)+118 (\psi^2+\lambda ^2)\right.} \right],
\end{align}
which does not conditionally~factorise.

Taking the same parameterisation as in  Equation (\ref{eq:mblanketM}), and~fixing $m_1 = m_2 = -\nicefrac{1}{2}$, we can search for a non-zero value of $m_3$ that leads to $U_{\psi \lambda} = 0$ (equivalent to \linebreak  \cref{con:fact} through  Equation (\ref{eq:con2mat})). We find such a value in the real root $c\simeq -0.08$ of the quintic equation $8c^5 -4c^4-6c^3+31c^2+40c+3=0$. That is, with~\begin{align}
  M=\left(
\begin{array}{cccc}
 -1 & -\nicefrac{1}{2} & -\nicefrac{1}{2} & c \\
 -\nicefrac{1}{2} & -1 & -\nicefrac{1}{2} & c \\
 c & -\nicefrac{1}{2} & -1 & -\nicefrac{1}{2} \\
 c & -\nicefrac{1}{2} & -\nicefrac{1}{2} & -1 \\
\end{array}
\right),
\end{align}
which does not satisfy \cref{con:flow}, we have
\begin{align}
   R =  \left(
\begin{array}{cccc}
 0 & -0.06\dots & \hphantom{-}0.22\dots & 0 \\
 \hphantom{-}0.06\dots & 0 & 0 & -0.22\dots \\
 -0.22\dots & 0 & 0 & \hphantom{-}0.06\dots \\
 0 & \hphantom{-}0.22\dots & -0.06\dots & 0 \\
\end{array}
\right),
\end{align}
and
\begin{align}
  U = \left(
\begin{array}{cccc}
0.96\dots & 0.43\dots & 0.30\dots & 0 \\
  0.43\dots & 1.03\dots & 0.58\dots & 0.30\dots \\
 0.30\dots & 0.58\dots & 1.03\dots & 0.43\dots \\
 0 &  0.30\dots& 0.43\dots & 0.96\dots \\
\end{array}
\right),
\end{align}
which has non-zero determinant (i.e., the~dynamics is ergodic) and an ergodic density satisfying \cref{con:fact}. This proves  Equation (\ref{eq:factnotimplyflow}).

\section{Counterexample for \cref{itm:rwpartgrad}}
\label{app:step2counter}

Here, we consider a linear system, as~in the previous appendix. We again assume $\Gamma$ equal to the identity matrix and choose a force matrix of the form
\begin{align}
  M=  \left(
\begin{array}{cccc}
 -1 & -\frac{1}{2 \sqrt{2}} & -\frac{1}{2 \sqrt{2}} & 0 \\
 -\frac{1}{2 \sqrt{2}} & -1 & -\frac{1}{16} & 0 \\
 0 & \frac{1}{16} & -1 & -\frac{1}{2 \sqrt{2}} \\
 0 & \frac{1}{2 \sqrt{2}} & -\frac{1}{2 \sqrt{2}} & -1 \\
\end{array}
\right)
\end{align}
which explicitly satisfies \cref{con:flow} and has full rank such that the system is ergodic. Using  Equation (\ref{eq:Rdef}) this leads to
\begin{align}
U=\left(
\begin{array}{cccc}
 \frac{1023}{1057} & \frac{260 \sqrt{2}}{1057} & \frac{136 \sqrt{2}}{1057} & 0 \\
 \frac{260 \sqrt{2}}{1057} & \frac{1091}{1057} & 0 & -\frac{136 \sqrt{2}}{1057} \\
 \frac{136 \sqrt{2}}{1057} & 0 & \frac{1091}{1057} & \frac{260 \sqrt{2}}{1057} \\
 0 & -\frac{136 \sqrt{2}}{1057} & \frac{260 \sqrt{2}}{1057} & \frac{1023}{1057} \\
\end{array}
\right)
\end{align}
which shows that this system also satisfies \cref{con:fact} since $U_{\psi \lambda}=U_{\lambda \psi}=0$. We also find
\begin{align}
    R=
\left(
\begin{array}{cccc}
 0 & -\frac{17}{1786 \sqrt{2}} & \frac{479}{1786 \sqrt{2}} & -\frac{62}{893} \\
 \frac{17}{1786 \sqrt{2}} & 0 & -\frac{65}{14288} & \frac{479}{1786 \sqrt{2}} \\
 -\frac{479}{1786 \sqrt{2}} & \frac{65}{14288} & 0 & \frac{17}{1786 \sqrt{2}} \\
 \frac{62}{893} & -\frac{479}{1786 \sqrt{2}} & -\frac{17}{1786 \sqrt{2}} & 0 \\
\end{array}
\right),
\end{align}
which shows that all entries or $R$ that can be non-zero for an anti-symmetric matrix are non-zero.
For the marginal ergodic density we find
\begin{align}
\begin{split}
p^*(s,a,\lambda)=\frac{239}{16 \sqrt{2415} \pi ^{3/2}}  \exp&\left[-\frac{69 a^2}{140}-\frac{37 a s}{70 \sqrt{2}}+\frac{1}{35} \sqrt{2} 2 a \psi \right.\\
&\left.\phantom{\left[\vphantom{-\frac{4867 s^2}{9660}}\right.}-\frac{4867 s^2}{9660}+\frac{74 s \psi }{2415}-\frac{8429 \psi ^2}{19320}\right]
\end{split}
\end{align}
The difference between the right hand sides of  Equations (\ref{eq:fa2}) and (\ref{eq:fa}) is
\begin{align}
  R_{a s} \nabla_s \ln p^*(s,a,\lambda)+ R_{a \lambda} \nabla_\lambda \ln p^*(s,a,\lambda)
  =\frac{37 a+69 \sqrt{2} \lambda -2563 s}{4830}
  \neq0,
\end{align}
which shows that  Equation (\ref{eq:fa}) is wrong in this example and therefore not generally equivalent to  Equation (\ref{eq:fa2}). Similarly, computing the difference between the right hand sides of  Equations (\ref{eq:flambda2}) and (\ref{eq:flambda}), one finds
\begin{align}
    R_{\lambda s} \nabla_s \ln p^*(s,a,\lambda)+ R_{\lambda a} \nabla_a \ln p^*(s,a,\lambda)
    =\frac{2 a-\sqrt{2} \lambda +27 s}{70 \sqrt{2}}
    \neq0,
\end{align}
and hence  Equation (\ref{eq:flambda}) is also incorrect in~general.

Performing the same comparison for the difference between the general expression in  Equations (\ref{eq:fa2}) and (\ref{eq:flambda2}) and the expressions taken from~\citep{friston_free_2019}, one finds
\begin{align}
  R_{a_i s} \nabla_s \ln p^*(s,a,\lambda)=\frac{73 \left(296 a+552 \sqrt{2} \lambda -8429 s\right)}{1,154,370}\neq 0
\end{align}
for the difference between the right hand sides of  Equations (\ref{eq:fa2}) and (\ref{eq:fa3}), and~\begin{align}
 R_{\lambda s} \nabla_s \ln p^*(s,a,\lambda)=-\frac{53 \left(296 a+552 \sqrt{2} \lambda -8429 s\right)}{1,154,370 \sqrt{2}}\neq 0,
\end{align}
for the difference between the right hand sides of  Equations (\ref{eq:flambda2}) and (\ref{eq:flambda3}). Therefore,  Equations (\ref{eq:fa3}) and (\ref{eq:flambda3}) are also incorrect in general, even when \cref{con:flow} and \cref{con:fact} both~hold.

\section{Counterexamples for \cref{itm:exq}}
\label{app:felcounter}
We saw in \cref{app:step2counter} that  Equations (\ref{eq:fa}) and (\ref{eq:flambda}) are not generally equivalent to  Equation (\ref{eq:25}), even when \cref{con:flow} and \cref{con:fact} hold simultaneously. We now show that if we instead use  Equations (\ref{eq:fa2}) and (\ref{eq:flambda2}), which are generally equivalent to  \linebreak Equation (\ref{eq:25}), the~Free Energy Lemma does not hold in~general.

The original Free Energy Lemma requires that (see  Equations (\ref{eq:fafe2}) and (\ref{eq:flambdafe2}))
\begin{align}
 (\Gamma+R)_{aa} \cdot \nabla_a \KL[q(\Psi|\lambda)||p^*(\Psi|s,a,\lambda)]&=0\label{eq:fela0}\\
 (\Gamma+R)_{\lambda \lambda} \cdot \nabla_\lambda \KL[q(\Psi|\lambda)||p^*(\Psi|s,a,\lambda)]&=0.\label{eq:fellambda0}          
\end{align}
replacing the partial gradient in  Equations (\ref{eq:fafe}) and (\ref{eq:flambdafe}) with the full gradient and including the entire matrix $(\Gamma+R)$ leads to the corresponding requirement for the more general case:
\begin{align}
\left( R_{a s} \cdot \nabla_s + (\Gamma_{a a}+ R_{a a}) \cdot \nabla_a + R_{a \lambda} \cdot \nabla_\lambda \right) \KL[q(\Psi|\lambda)||p^*(\Psi|s,a,\lambda)]&=0  \label{eq:fela02}\\
\left( R_{\lambda s} \cdot \nabla_s + R_{\lambda a} \cdot \nabla_a + (\Gamma_{\lambda \lambda}+R_{\lambda \lambda}) \cdot \nabla_\lambda \right) \KL[q(\Psi|\lambda)||p^*(\Psi|s,a,\lambda)]&=0.  \label{eq:fellambda02}
\end{align}
Similarly, the~version based on the equations taken from~\citep{friston_free_2019} implies
\begin{align}
    \left( (\Gamma_{a a}+ R_{a a}) \cdot \nabla_a + R_{a \lambda} \cdot \nabla_\lambda \right) \KL[q(\Psi|\lambda)||p^*(\Psi|s,a,\lambda)]&=0  \label{eq:fela03}\\
\left(  R_{\lambda a} \cdot \nabla_a + (\Gamma_{\lambda \lambda}+R_{\lambda \lambda}) \cdot \nabla_\lambda \right) \KL[q(\Psi|\lambda)||p^*(\Psi|s,a,\lambda)]&=0.  \label{eq:fellambda03}
\end{align}

Using the rules of Gaussian integration, we can write the logarithm of the conditional ergodic density as
\begin{align}
 \ln p^*(\psi|s,a,\lambda) = -\frac{1}{2}|U_{\psi\psi}^{\frac{1}{2}}\psi + U_{\psi\psi}^{-\frac{1}{2}}U_{\psi s} s + U_{\psi\psi}^{-\frac{1}{2}}U_{\psi a} a + U_{\psi\psi}^{-\frac{1}{2}}U_{\psi \lambda} \lambda|^2 + C, \label{eq:condexponent}
\end{align}
with $C$ a constant (and remembering each of $\psi$, $s$, $a$ and $\lambda$ is a vector of coordinates in general). We can then expand out the derivatives of the KL divergence to express them in terms of the coordinates:
\begin{align}
    \nabla_s \KL[q(\Psi|\lambda)||p^*(\psi|s,a,\lambda)] =& -\int {\rm d}\psi\, q(\psi|\lambda)\nabla_s\ln p^*(\psi|s,a,\lambda) \nonumber \\
    =&  U_{s\psi}U_{\psi\psi}^{-1}\big(U_{\psi s } s + U_{\psi a } a \nonumber \\
    &\qquad\qquad+ U_{\psi \lambda } \lambda + U_{\psi \psi } \langle \psi \rangle_{q(\Psi|\lambda)}\big), \label{eq:gradKLs}\\
    \nabla_a \KL[q(\Psi|\lambda)||p^*(\psi|s,a,\lambda)] =& -\int {\rm d}\psi\, q(\psi|\lambda)\nabla_a\ln p^*(\psi|s,a,\lambda)\nonumber \\
    =&  U_{a\psi}U_{\psi\psi}^{-1}\big(U_{\psi s } s + U_{\psi a } a \nonumber \\
    &\qquad\qquad+ U_{\psi \lambda } \lambda + U_{\psi \psi } \langle \psi \rangle_{q(\Psi|\lambda)}\big), \label{eq:gradKLa}\\
    \nabla_\lambda \KL[q(\Psi|\lambda)||p^*(\psi|s,a,\lambda)] =&\int {\rm d}\psi\,\Big[\left(\ln q(\psi|\lambda) - \ln p^*(\psi|s,a,\lambda) + 1\right)\nabla_\lambda q(\psi|\lambda) \nonumber \\
  &\qquad \quad- q(\psi|\lambda)\nabla_\lambda\ln p^*(\psi|s,a,\lambda)\Big]\nonumber \\
    =&  U_{\lambda\psi}U_{\psi\psi}^{-1}\big(U_{\psi s } s + U_{\psi a } a \nonumber \\
    &\qquad\qquad+ U_{\psi \lambda } \lambda + U_{\psi \psi } \langle \psi \rangle_{q(\Psi|\lambda)}\big) \nonumber \\
    &+ \nabla_\lambda\langle \psi \rangle_{q(\Psi|\lambda)}\big(U_{\psi s } s + U_{\psi a } a + U_{\psi \lambda } \lambda \big) \nonumber \\
    & + \nabla_\lambda\left(\langle \psi^T U_{\psi\psi} \psi \rangle_{q(\Psi|\lambda)}-H[q(\Psi|\lambda)]\right),\label{eq:gradKLlambda}
\end{align}
with $\langle g(\psi)\rangle_{q(\Psi|\lambda)}:=\int{\rm d}\psi\, q(\psi|\lambda) g(\psi)$ and $H$ the Shannon~entropy.

Substituting  Equations (\ref{eq:gradKLa}) and (\ref{eq:gradKLlambda}) into  Equations (\ref{eq:fela0}) and (\ref{eq:fellambda0}) leads to
\begin{align}
    (\Gamma_{aa} + R_{aa})U_{a\psi}U_{\psi\psi}^{-1}\left(U_{\psi s } s + U_{\psi a } a + U_{\psi \lambda } \lambda + U_{\psi \psi } \langle \psi \rangle_{q(\Psi|\lambda)}\right) = 0,
\end{align}
and
\begin{align}
    0=&(\Gamma_{\lambda\lambda} + R_{\lambda\lambda})U_{\lambda\psi}U_{\psi\psi}^{-1}\big(U_{\psi s } s + U_{\psi a } a + U_{\psi \lambda } \lambda + U_{\psi \psi } \langle \psi \rangle_{q(\Psi|\lambda)}\big) \nonumber \\
    &+ (\Gamma_{\lambda\lambda} + R_{\lambda\lambda})\nabla_\lambda\langle \psi \rangle_{q(\Psi|\lambda)}\big(U_{\psi s } s + U_{\psi a } a + U_{\psi \lambda } \lambda \big) \nonumber \\
    & + (\Gamma_{\lambda\lambda} + R_{\lambda\lambda})\nabla_\lambda\left(\langle \psi^T U_{\psi\psi} \psi \rangle_{q(\Psi|\lambda)}-H[q(\Psi|\lambda)]\right).
\end{align}
Since these must hold for all values of the coordinates, they put strong requirements on the $U$ and $R$ matrices. Specifically,
\begin{align}
     (\Gamma_{aa} + R_{aa})U_{a\psi}U_{\psi\psi}^{-1}U_{\psi s} &= 0, \\ (\Gamma_{aa} + R_{aa})U_{a\psi}U_{\psi\psi}^{-1}U_{\psi a} &= 0, \\ (\Gamma_{aa} + R_{aa})U_{a\psi}U_{\psi\psi}^{-1}U_{\psi \lambda} &= 0.
\end{align}
In other words, since $U_{\psi\psi}$ and $\Gamma_{aa}$ must be nonzero for the dynamics to be ergodic, it must be that $U_{\psi a}=0$. {\hl{(}This is equivalent to $p^*(\Psi|s,a,\lambda)=p^*(\Psi|s,\lambda)$. So if \cref{con:fact} also holds we must have $p^*(\Psi|s,a,\lambda)=p^*(\Psi|s)$ in order for there to be a suitable $q(\Psi|\lambda)$.\hl{)}}
Specifically, consider the system specified by the force matrix
\begin{align}
    M=\left(
\begin{array}{cccc}
 -1 & 0 & \frac{1}{2} & 0 \\
 0 & -1 & 0 & 0 \\
 0 & 0 & -1 & 0 \\
 0 & 0 & 0 & -1 \\
\end{array}
\right)
\end{align}
leads to
\begin{align}
    R=\left(
\begin{array}{cccc}
 0 & 0 & -\frac{1}{4} & 0 \\
 0 & 0 & 0 & 0 \\
 \frac{1}{4} & 0 & 0 & 0 \\
 0 & 0 & 0 & 0 \\
\end{array}
\right)
\end{align}
and
\begin{align}
    U=\left(
\begin{array}{cccc}
 \frac{16}{17} & 0 & -\frac{4}{17} & 0 \\
 0 & 1 & 0 & 0 \\
 -\frac{4}{17} & 0 & \frac{18}{17} & 0 \\
 0 & 0 & 0 & 1 \\
\end{array}
\right).
\end{align}
Here $M$ is full rank so the system is ergodic, clearly it also satisfies \cref{con:flow} due to the structure of $M$. Since $R_{as}=R_{a \lambda}=R_{\lambda s}=0$ it obeys  Equations (\ref{eq:fa}) and (\ref{eq:flambda}) and since $Q_{\psi \lambda}=0$ it also obeys \cref{con:fact}. Additionally, we find $U_{\psi a}= -\nicefrac{4}{17}$ which is a~contradiction.

For the more general version, 
substituting  Equations (\ref{eq:gradKLs})--(\ref{eq:gradKLlambda}) into  Equation (\ref{eq:fela02}), one finds
\begin{align}
    0=&\left(\left(R_{as}U_{s\psi} + (R_{aa} + \Gamma_{aa})U_{a\psi} + R_{a\lambda}U_{\lambda\psi} \right)U_{\psi\psi}^{-1} + R_{a\lambda}\nabla_\lambda\langle \psi \rangle_{q(\Psi|\lambda)} \right) U_{\psi s}s  \nonumber \\
    &+ \left(\left(R_{as}U_{s\psi} + (R_{aa} + \Gamma_{aa})U_{a\psi} + R_{a\lambda}U_{\lambda\psi} \right)U_{\psi\psi}^{-1} + R_{a\lambda}\nabla_\lambda\langle \psi \rangle_{q(\Psi|\lambda)} \right) U_{\psi a} a \nonumber \\
    &+ \left(\left(R_{as}U_{s\psi} + (R_{aa} + \Gamma_{aa})U_{a\psi} + R_{a\lambda}U_{\lambda\psi} \right)U_{\psi\psi}^{-1} + R_{a\lambda}\nabla_\lambda\langle \psi \rangle_{q(\Psi|\lambda)} \right) U_{\psi \lambda} \lambda \nonumber \\
    &+ \left(R_{as}U_{s\psi} + (R_{aa} + \Gamma_{aa})U_{a\psi} + R_{a\lambda}U_{\lambda\psi} \right) \langle \psi \rangle_{q(\Psi|\lambda)} \nonumber \\
    &+ R_{a\lambda}\nabla_\lambda\left(\langle \psi^T U_{\psi\psi} \psi \rangle_{q(\Psi|\lambda)}-H[q(\Psi|\lambda)]\right),
\end{align}
which, considering that the coordinates can take any values, implies that
\begin{gather} \label{eq:aimplicationquantity}
    \left(R_{as}U_{s\psi} + (R_{aa} + \Gamma_{aa})U_{a\psi} + R_{a\lambda}U_{\lambda\psi} \right)U_{\psi\psi}^{-1} + R_{a\lambda}\nabla_\lambda\langle \psi \rangle_{q(\Psi|\lambda)}
\end{gather}
lies in a common (left) nullspace of $U_{\psi s}$, $U_{\psi a}$ and $U_{\psi \lambda}$. However, the~existence of such a nontrivial nullspace would imply that the corresponding subspace of $\psi$ coordinates is independent of the $s$, $a$ and $\lambda$ coordinates (to see this, consider marginalising over their complement in  Equation (\ref{eq:condexponent})). In~other words, if~only $\psi$ coordinates that play a nontrivial role in the dynamics are considered, then  Equation (\ref{eq:fela02}) must imply the quantity in  Equation (\ref{eq:aimplicationquantity}) is zero, and~hence that
\begin{gather}
    R_{a\lambda}\nabla_\lambda\langle \psi \rangle_{q(\Psi|\lambda)} = -\left(R_{as}U_{s\psi} + (R_{aa} + \Gamma_{aa})U_{a\psi} + R_{a\lambda}U_{\lambda\psi} \right)U_{\psi\psi}^{-1}. \label{eq:contradictiona}
\end{gather}

However, through a similar procedure, one finds that  Equation (\ref{eq:fellambda02}) is equivalent to
\begin{align}
    0=&\Big(\left(R_{\lambda s}U_{s\psi} + R_{\lambda a}U_{a\psi} + (\Gamma_{\lambda \lambda} + R_{\lambda\lambda})U_{\lambda\psi} \right)U_{\psi\psi}^{-1} \nonumber \\&\qquad+ (\Gamma_{\lambda \lambda} + R_{\lambda\lambda})\nabla_\lambda\langle \psi \rangle_{q(\Psi|\lambda)} \Big) U_{\psi s}s  \nonumber \\
    &+ \Big(\left(R_{\lambda s}U_{s\psi} + R_{\lambda a}U_{a\psi} + (\Gamma_{\lambda \lambda} + R_{\lambda\lambda})U_{\lambda\psi} \right)U_{\psi\psi}^{-1} \nonumber \\&\qquad+ (\Gamma_{\lambda \lambda} + R_{\lambda\lambda})\nabla_\lambda\langle \psi \rangle_{q(\Psi|\lambda)} \Big) U_{\psi a} a \nonumber \\
    &+ \Big(\left(R_{\lambda s}U_{s\psi} + R_{\lambda a}U_{a\psi} + (\Gamma_{\lambda \lambda} + R_{\lambda\lambda})U_{\lambda\psi} \right)U_{\psi\psi}^{-1} \nonumber \\&\qquad+ (\Gamma_{\lambda \lambda} + R_{\lambda\lambda})\nabla_\lambda\langle \psi \rangle_{q(\Psi|\lambda)} \Big) U_{\psi \lambda} \lambda \nonumber \\
    &+ \left(R_{\lambda s}U_{s\psi} + R_{\lambda a}U_{a\psi} + (\Gamma_{\lambda \lambda} + R_{\lambda\lambda})U_{\lambda\psi} \right) \langle \psi \rangle_{q(\Psi|\lambda)} \nonumber \\
    &+ (\Gamma_{\lambda \lambda} + R_{\lambda\lambda})\nabla_\lambda\left(\langle \psi^T U_{\psi\psi} \psi \rangle_{q(\Psi|\lambda)}-H[q(\Psi|\lambda)]\right),
\end{align}
implying that
\begin{gather}
    (\Gamma_{\lambda \lambda} + R_{\lambda\lambda})\nabla_\lambda\langle \psi \rangle_{q(\Psi|\lambda)} = -\left(R_{\lambda s}U_{s\psi} + R_{\lambda a}U_{a\psi} + (\Gamma_{\lambda \lambda} + R_{\lambda\lambda})U_{\lambda\psi} \right)U_{\psi\psi}^{-1}. \label{eq:contradictionlambda}
\end{gather}
Unless $R_{a\lambda}$ and $ (\Gamma_{\lambda \lambda} + R_{\lambda\lambda})$ share a common nullspace, or~the $U$ and $R$ matrices are finely tuned, then  Equations (\ref{eq:contradictiona}) and (\ref{eq:contradictionlambda}) contradict one another. In~this case, there cannot exist a $q(\Psi|\lambda)$ that satisfies both  Equations (\ref{eq:fela02}) and (\ref{eq:fellambda02}), and~hence the modified Free Energy Lemma is invalid in general. In~particular, using the example from \cref{app:step2counter}, if~we solve  Equation (\ref{eq:contradictiona}) for $\nabla_\lambda\langle \psi \rangle_{q(\Psi|\lambda)}$ we find
\begin{align}
   \nabla_\lambda\langle \psi \rangle_{q(\Psi|\lambda)}=-\frac{53}{5},
   \end{align}
and from  Equation (\ref{eq:contradictionlambda}) we get
\begin{align}
    \nabla_\lambda\langle \psi \rangle_{q(\Psi|\lambda)}=\frac{29}{239},
\end{align}
which is a~contradiction.

If we now perform the same procedure for  Equations (\ref{eq:fela03}) and (\ref{eq:fellambda03}), we arrive at the following conditions on the gradient of the variational density:
\begin{align} \label{eq:contradictiona2}
    R_{a\lambda} \nabla_\lambda\langle \psi \rangle_{q(\Psi|\lambda)} = -\left((R_{aa} + \Gamma_{aa})U_{a\psi} + R_{a\lambda}U_{\lambda\psi} \right)U_{\psi\psi}^{-1}.
\end{align}
and
\begin{align}\label{eq:contradictionlambda2}
    R_{a\lambda}\nabla_\lambda\langle \psi \rangle_{q(\Psi|\lambda)} = -\left(R_{\lambda a}U_{a\psi} + (\Gamma_{\lambda \lambda} + R_{\lambda\lambda})U_{\lambda\psi} \right)U_{\psi\psi}^{-1}.
\end{align}
Even when \cref{con:fact} holds and $U_{\psi \lambda}=0$, these will be inconsistent in general. As~a specific counterexample, take the system with force matrix
\begin{align}
    M=\left(
\begin{array}{cccc}
 -1 & 0 & -\frac{1}{2} & 0 \\
 0 & -1 & 0 & 0 \\
 0 & 0 & -1 & \frac{\sqrt{3}}{2} \\
 0 & 0 & 0 & -\frac{1}{2} \\
\end{array}
\right),
\end{align}
with correposponding
\begin{align}
    R=\left(
\begin{array}{cccc}
 0 & 0 & \frac{1}{3} & \frac{1}{3\sqrt{3}} \\
 0 & 0 & 0 & 0 \\
 -\frac{1}{3} & 0 & 0 & -\frac{1}{\sqrt{3}} \\
 -\frac{1}{3\sqrt{3}} & 0 & \frac{1}{\sqrt{3}} & 0 \\
\end{array}
\right),
\end{align}
and
\begin{align}
    U=\left(
\begin{array}{cccc}
 \frac{9}{10} & 0 & \frac{3}{10} & 0 \\
 0 & 1 & 0 & 0 \\
 \frac{3}{10} & 0 & \frac{17}{20} & -\frac{\sqrt{3}}{4} \\
 0 & 0 & -\frac{\sqrt{3}}{4} & \frac{3}{4} \\
\end{array}
\right).
\end{align}
This model is ergodic (full rank $U$), and~it satisfies both \cref{con:flow} and \cref{con:fact}. Moreover, the~forces satisfy  Equations (\ref{eq:fa3}) and (\ref{eq:flambda3}). However, substituting the relevant elements of $U$ and $R$ matrices into  Equation (\ref{eq:contradictiona2}), we find
\begin{align}
   \nabla_\lambda\langle \psi \rangle_{q(\Psi|\lambda)} = \frac{1}{\sqrt{3}}, 
\end{align}
but doing the same for  Equation (\ref{eq:contradictionlambda2}) gives
\begin{align}
    \nabla_\lambda\langle \psi \rangle_{q(\Psi|\lambda)} = \frac{1}{3},
\end{align}
which is a~contradiction.

\section{Counterexample for \cref{itm:pqequal}}
\label{app:pqcounter}
Here we provide an example system for which \cref{con:flow,con:fact} as well as  \cref{itm:rwfullgrad,itm:rwpartgrad,itm:exq,itm:gradvanish} are valid but \cref{itm:pqequal} fails. 
We use a system with
\begin{align}
  f(x)=M x
\end{align}
where
\begin{align}
\label{eq:Mpqcounter}
  M:=\left(
\begin{array}{cccc}
 -1 & \nicefrac{1}{2} & 0 & 0 \\
 \nicefrac{1}{2} & -1 & \nicefrac{1}{2} & 0 \\
 0 & \nicefrac{1}{2} & -1 & \nicefrac{1}{2} \\
 0 & 0 & \nicefrac{1}{2} & -1 \\
\end{array}
\right).
\end{align}
This system is ergodic, satisfies \cref{con:flow} and as we will
will see satisfies  Equations (\ref{eq:fa}) and (\ref{eq:flambda}) as well. Using  Equation (\ref{eq:Rdef}) we find
\begin{align}
  R=\left(
\begin{array}{cccc}
 0 & 0 & 0 & 0 \\
 0 & 0 & 0 & 0 \\
 0 & 0 & 0 & 0 \\
 0 & 0 & 0 & 0 \\
\end{array}
\right)
\end{align}
and from Equation (\ref{eq:Udef})
\begin{align}
    U=-M
\end{align}
which means that \cref{con:fact} is also~satisfied.

This leads to the ergodic density
\begin{align}
p^*(\psi,s,a,\lambda)= & \frac{\sqrt{5}}{16 \pi ^2} e^{-\frac{1}{2} \left(\psi^2-\psi s+s^2-s a+a^2-a \lambda+\lambda^2\right)}
\end{align}
which can be used to check that  Equations (\ref{eq:fa}) and (\ref{eq:flambda}) hold for this example.
The conditional ergodic density is
\begin{align}
\label{eq:pstarmpqcounter}
  p^*(\psi|s,a,\lambda)=p^*(\psi|s)=\frac{1}{\sqrt{2 \pi }}e^{-\frac{1}{2} \left(\psi-\frac{1}{2} s\right)^2}.
\end{align}

If we now define $q(\psi|\lambda)=q(\psi) = \exp(-(\psi-\mu)^2/2)/\sqrt{2\pi}$ as a Gaussian distribution with mean $\mu$ and variance one, we can compute the KL divergence to get:
\begin{align}
  \KL[q(\Psi)||p^*(\Psi|s,a,\lambda)]=K(s)=
  \frac{1}{2} \left( \mu -\frac{1}{2} s\right)^2.
\end{align}

Clearly, for~this choice of $q(\psi|\lambda)$ the gradients with respect to $a$ and $\lambda$ of the KL divergence vanish everywhere (Equations (\ref{eq:gradakl0}) and (\ref{eq:gradlambdakl0}) hold). This also means we can express $f_a,f_\lambda$ in terms of a free energy i.e., the Free Energy Lemma holds for this system. 
However, for~any proposed bound $c\geq 0$ on the KL divergence, there is a value of $s$ for which it is exceeded, whatever the choice of $\mu$. Moreover, we can choose a $\mu$ such that the KL divergence is larger than any given $c$, even when $s=0$.

\section{Translating Systems into Generalized Coordinates~Systems}
\label{app:ieeecounter}

We show how to get a generalized coordinate system from a finite dimensional system. By~definition the generalized coordinates are infinite dimensional. For~all $n \in \mathbb{N}$ and a coordinate $x$ they also include the $n$-th time derivative of $x$.

Assume as given an ergodic, linear, random dynamical system described by
\begin{align}
    \dot{x}=M x +\omega
\end{align}
where $x=(x_1,...,x_k)$ is a $k$-dimensional vector, $M$ is a $k \times k$ real valued matrix, and~$\dot{x}:=\ddt x$.
We can look at the second time derivative of the state by differentiating both sides:
\begin{align}
    \ddt \dot{x}&=\ddt (M x + \omega) \\
    \ddot{x}&=M \dot{x} + \dot{\omega}
\end{align}
Similarly for the third time derivative:
\begin{align}
    \ddt \ddot{x}&=\ddt (M \dot{x} + \dot{\omega}) \\
    &=M \ddot{x} + \ddot{\omega}
\end{align}
Similarly for all higher derivatives:
\begin{align}
    \frac{d^n}{d t^n}x&=M  \frac{d^{n-1}}{d t^{n-1}}x+ \frac{d^n}{d t^n}\omega.
\end{align}
Now define the generalized coordinates $\tx=(x,x',x'',...)$ as
\begin{align}
    x &= x\\
    x' &= \ddt x\\
    x'' &= \frac{d^2}{d t^2}x\\
    &\vdots\\
    x^{(n)}&=\frac{d^n}{d t^n}x\\
    &\vdots
\end{align}
Define also
\begin{align} \label{eq:noisederivs}
    \tilde{\omega}:=(\omega,\ddt \omega,\frac{d^2}{d t^2} \omega,...,\frac{d^n}{d t^n} \omega,...).
\end{align}
Without further clarification, the~derivatives of $\omega$ are not well defined when the latter is a Gaussian white noise process, as~explicitly assumed in writing the vector field $f(x)$ in terms of the ergodic density~\citep{ao_potential_2004, kwon_structure_2005, kwon_nonequilibrium_2011}. As~discussed in \citep{vanKampen_stochastic_1981}, delta-correlated Markovian noise is always a limiting approximation of noise with a finite correlation time. Meaningfully taking the derivatives requires first choosing a functional form for the (co)variance whose limit is a delta function {\hl{(}Another, more direct approach would be in terms of generalized functions, but~here too additional information is required to specify the derivatives~\citep{oberguggenberger_generalized_1995}.\hl{)}} However, different choices can lead to vastly different central moments of the generalized noise distribution, including those that vanish or diverge at all orders. In~the former case, the~process in terms of generalized coordinates may not be ergodic~\citep{cornfeld_ergodic_1982}; in the latter case, the~process is not well defined. In~general, it is not clear that Equation \eqref{eq:25} holds in the non-Markovian case, since the standard derivations in~\citep{ao_potential_2004,kwon_structure_2005} and related works rely on delta-correlated~noise. 

Here, we can therefore assume that the noise is such that the derivatives in Equation~\eqref{eq:noisederivs} can be treated as Markov and Gaussian. We also assume that $\frac{d^n}{d t^n} \omega$ is independently and identically distributed to $\frac{d^{n-1}}{d t^{n-1}} \omega$ for all $n$. 
Finally, we can then define the (infinite) matrix $\bM$ as the block diagonal matrix with all blocks equal to $M$:
\begin{align}
    \bM := \begin{pmatrix} 
    M & 0 & \dots & \\
    0 & M  &  &\\
    \vdots &  & \ddots & \\
    \end{pmatrix}
\end{align}
The time derivative of $\omega$ is independent of $\omega$, as~the changes are independent of the value of $\omega$. So we actually get an infinite number of independent and identically distributed systems.
Using these definitions we have:
\begin{align}
   \dot{\tx} = \bM \tx + \tilde{\omega}. 
\end{align}
These equations describe a random dynamical system composed of an infinite number of independent linear random dynamical systems, all governed by the same matrix $M$ and driven by independently and identically distributed noise. Since the first of these systems (for the variables $x$) is ergodic by assumption, all of the subsystems are also ergodic and, therefore, the~whole system is ergodic with the ergodic density equal to a product of the original ergodic density:
\begin{align}
    \bar{p}^*(\tx)=p^*(x) p^*(x') p^*(x'') \cdots p^*(x^{(n)}) \cdots. 
\end{align}
Additionally, if~$M$ is such that
\begin{align}
\label{eq:mblanketM2}
\begin{split}
M_\psi \cdot (\psi,s,a,r)^\top = f_\psi(\psi,s,a) &= (\Gamma - Q)_{\psi \psi} \nabla_\psi \ln p^*(\psi,s,a,r)\\
M_s \cdot (\psi,s,a,r)^\top = f_s(\psi,s,a) &= (\Gamma - Q)_{s s} \nabla_s \ln p^*(\psi,s,a,r)\\
M_a \cdot (\psi,s,a,r)^\top = f_a(s,a,r) &= (\Gamma - Q)_{a a} \nabla_a \ln p^*(\psi,s,a,r)\\
M_r \cdot (\psi,s,a,r)^\top = f_r(s,a,r) &= (\Gamma - Q)_{r r} \nabla_r \ln p^*(\psi,s,a,r),
\end{split}
\end{align}
(which is the case for the $M$ in the counterexample to \cref{itm:pqequal}) then for
\begin{align}
    \begin{split}
        (x_1,x_2,x_3,x_4):&=(\psi,s,a,r)\\
        (x'_1,x'_2,x'_3,x'_4):&=(\psi',s',a',r')\\
        (x''_1,x''_2,x''_3,x''_4):&=(\psi'',s'',a'',r'')\\
        &\vdots\\
        (x^{(n)}_1,x^{(n)}_2,x^{(n)}_3,x^{(n)}_4):&=(\psi^{(n)},s^{(n)},a^{(n)},r^{(n)})\\
        &\vdots,
    \end{split}
\end{align}
\begin{align}
    \bQ := \begin{pmatrix} 
    Q & 0 & \dots & \\
    0 & Q  &  &\\
    \vdots &  & \ddots & \\
    \end{pmatrix},
\end{align}
\begin{align}
    \bGamma := \begin{pmatrix} 
    \Gamma & 0 & \dots & \\
    0 & \Gamma  &  &\\
    \vdots &  & \ddots & \\
    \end{pmatrix},
\end{align}
and using Equation (\ref{eq:Udef}) and that the inverse of a block diagonal matrix is block diagonal
\begin{align}
    \bar{U} := \begin{pmatrix} 
    U & 0 & \dots & \\
    0 & U  &  &\\
    \vdots &  & \ddots & \\
    \end{pmatrix},
\end{align}
we also have:
\begin{align}
\begin{split}
\bM_\tpsi \cdot (\tpsi,\ts,\ta,\tr)^\top = f_\tpsi(\tpsi,\ts,\ta) &= (\bGamma - \bQ)_{\tpsi \tpsi} \nabla_\tpsi \ln p^*(\tpsi,\ts,\ta,\tr)\\
\bM_\ts \cdot (\tpsi,\ts,\ta,\tr)^\top = f_\ts(\tpsi,\ts,\ta) &= (\bGamma - \bQ)_{\ts \ts} \nabla_\ts \ln p^*(\tpsi,\ts,\ta,\tr)\\
\bM_\ta \cdot (\tpsi,\ts,\ta,\tr)^\top = f_\ta(\ts,\ta,\tr) &= (\bGamma - \bQ)_{\ta \ta} \nabla_\ta \ln p^*(\tpsi,\ts,\ta,\tr)\\
\bM_\tr \cdot (\tpsi,\ts,\ta,\tr)^\top = f_\tr(\ts,\ta,\tr) &= (\bGamma - \bQ)_{\tr \tr} \nabla_\tr \ln p^*(\tpsi,\ts,\ta,\tr).
\end{split}
\end{align}

The ergodic density of such a system is a product of ergodic densities of the original system Equation (\ref{eq:pstarmpqcounter}):
\begin{align}
    \bp^*(\tpsi,\ts,\ta,\tr)=p^*(\psi,s,a,r) p^*(\psi',s',a',r') p^*(\psi'',s'',a'',r'') \cdots.
\end{align}
Thus any property of the original system is also a property of the generalized coordinate~system.


\end{appendices}

\bibliographystyle{apalike}
\bibliography{bibliography}
\end{document}